\numberwithin{equation}{section}
\newcommand{\dd}{\partial}
\newcommand{\f}{\frac}
\DeclareMathOperator{\rk}{rk}
\newtheorem{theorem}{Theorem}[section]
\newtheorem{corollary}[theorem]{Corollary}
\newtheorem{proposition}[theorem]{Proposition}
\newtheorem*{AffClass}{Theorem: affine classification}
\newtheorem*{CompTh}{Compatibility Theorem}
\newtheorem*{CorrTh}{Projective Correspondence Theorem}
\theoremstyle{definition}
\newtheorem{definition}[theorem]{Definition}
\theoremstyle{remark}
\newtheorem{remark}[theorem]{Remark}
\numberwithin{equation}{section}
\newcommand*{\pd}
[2]{\mathchoice{\frac{\partial#1}{\partial#2}}
  {\partial#1/\partial#2}{\partial#1/\partial#2}
  {\partial#1/\partial#2}}
\def\d{\partial}
\begin{document}
\title[Bi-Hamiltonian structures of KdV type]{
  Bi-Hamiltonian structures of KdV type,\\ cyclic Frobenius algebrae\\ and
  Monge metrics} \author{Paolo Lorenzoni}
\address{P.~Lorenzoni
  \\
  Dipartimento di Matematica e Applicazioni, Universit\`a di Milano-Bicocca,
  \\
  Via Roberto Cozzi 55, I-20125 Milano, Italy and INFN sezione di
  Milano-Bicocca}
\email{paolo.lorenzoni@unimib.it}
\author{Raffaele Vitolo}
\thanks{This research has been partially supported by the Department of
  Mathematics and Applications of the Universit\`a di Milano-Bicocca,
  Department of Mathematics and Physics ``E. De Giorgi'' of the Universit\`a
  del Salento, GNFM of the Istituto Nazionale di Alta Matematica (INdAM), the
  research project Mathematical Methods in Non Linear Physics (MMNLP) by the
  Commissione Scientifica Nazionale -- Gruppo 4 -- Fisica Teorica of the
  Istituto Nazionale di Fisica Nucleare (INFN). P L is supported by funds of H2020-
MSCA-RISE-2017 Project No. 778010 IPaDEGAN}
\address{R.~Vitolo\\
  Dipartimento di Matematica e Fisica ``E. De Giorgi''
  Universit\`a del Salento
  \\
  via per Arnesano, 73100 Lecce, Italy and
  INFN sezione di Lecce}
\email{raffaele.vitolo@unisalento.it}

\begin{abstract}
  We study algebraic and projective geometric properties of Hamiltonian trios
  determined by a constant coefficient second-order operator and two
  first-order localizable operators of Ferapontov type. We show that
  first-order operators are determined by Monge metrics, and define a structure
  of cyclic Frobenius algebra. Examples include the AKNS system, a
  $2$-component generalization of Camassa-Holm equation and the Kaup--Broer
  system. In dimension $2$ the trio is completely determined by two conics of
  rank at least $2$. We provide a partial classification in dimension $4$.
  \\
  \textbf{MSC 2020 classification}: Primary 37K10 secondary 37K20, 37K25.\\
  \textbf{Keywords:} Hamiltonian trios, projective geometry, Monge metrics.
\end{abstract}

\maketitle

\section{Introduction}

It was observed in \cite{OR} that many important bi-Hamiltonian structures of
integrable (systems of) PDEs have the form
\begin{equation}\label{kdvtype}
  (P_1,Q_1+\epsilon^kR_{k+1}),
\end{equation}
(no sum over $k$) where $P_1$ and $Q_1$ are first order compatible homogeneous
Hamiltonian operators (Hamiltonian operators of hydrodynamic type) and
$R_{k+1}$ is a single $(k+1)$-th order omogeneous Hamiltonian operator
compatible with $P_1$ and $Q_1$. Here, the homogeneity is defined with respect
to the grading $\operatorname{deg}\partial_x=1$.

Denoting with the square bracket the Schouten bracket we have
\begin{displaymath}
  [P_1,Q_1]=[P_1,R_{k+1}]=[Q_1,R_{k+1}]=0.
\end{displaymath}
In other words the building blocks of the pair \eqref{kdvtype}
$(P_1,Q_1,R_{k+1})$ define a \emph{trio} of Hamiltonian structures.  The above
structure can be thought as a deformation of the bi-Hamiltonian structure of
hydrodynamic type $(P_1,Q_1)$. Due to the general theory of deformations the
most interesting cases are $k=1$ and $k=2$ since for $k>2$ the deformation
$R_{k+1}$ can be always eliminated by Miura type transformations \cite{LZ}.
The most famous example of such structures is the Hamiltonian trio
\begin{equation}
  P=P_1=\d_x,\qquad Q=Q_1+R_3,\quad Q_1=2u\d_x+u_x,\quad R_3=\d_x^3.
\end{equation}
Coupling $Q_1$ and $R_3$ one obtains the bi-Hamiltonian structure of the KdV
hierarchy
\begin{equation}
  (\d_x,2u\d_x+u_x+\epsilon^2\d_x^3)
\end{equation}
discovered by Magri in \cite{magri}, while coupling $P_1$ and $R_3$ one obtains
the bi-Hamiltonian structure of the Camassa--Holm hierarchy
\begin{equation}
  (2u\d_x+u_x,\d_x+\epsilon^2\d_x^3).
\end{equation}
Bi-Hamiltonian structures \eqref{kdvtype} obtained in this way have been introduced in \cite{OR} and have been called
in \cite{LSV:bi_hamil_kdv} \emph{bi-Hamiltonian structures of KdV type}.
Another example (from \cite{LZ,F}) is the trio:
\begin{equation}
  P_1=\begin{pmatrix}
    0 & \d_x\\ \d_x & 0 
  \end{pmatrix},\,\, Q_1=
  \begin{pmatrix}
    2u\partial_x+u_x &  v\d_x\\
    \d_x v & -2\d_x
  \end{pmatrix},\,\, R_2=\begin{pmatrix}
    0 &  -\d_x^2\\
    \d_x^2 & 0
  \end{pmatrix}
\end{equation}
In this case one coupling yields the bi-Hamiltonian structure of the the
so-called AKNS hierarchy, and the other one yields the bi-Hamiltonian structure
of the two component Camassa-Holm hierarchy \cite{LZ,F}.

In order to classify this kind of bi-Hamiltonian structures (with $k=1,2$) one
can use the following strategy:
\begin{enumerate}
\item Use canonical forms of $R_{k+1}$ under some natural groups of
  transformations preserving the form of $(P_1,Q_1,R_{k+1})$.
\item Compute compatibility conditions $[P_1,R_{k+1}]=0$.
\item Use compatibility conditions to obtain trios of Hamiltonian operators.
\end{enumerate}
The above strategy is motivated by the fact that there exist classifications of
canonical forms of operators $R_{k+1}$ under the action of various
transformation groups, while trying to work with canonical forms of $P_1$ or
$Q_1$ does not lead to manageable forms of the corresponding $R_{k+1}$, in view
of the greater complexity of the latter.

There are two natural choices of the group of transformations to deal with: the
group of diffeomorphisms of the dependent variables and the groups of
reciprocal projective transformations of the independent variables. Depending
on the choice of the group the problem admits a slightly different
formulation. In the first case, since the group of diffeomorphisms preserves
the locality of Hamiltonian operators it is possible to restrict the attention
only to \emph{local} first order Hamiltonian operators (also known as
Dubrovin-Novikov Hamiltonian operators)
\begin{equation}
  \label{eq:5}
  P_1 = g^{ij}\d_x + \Gamma^{ij}_ku^k_x.
\end{equation}
In the second case, we can use the group of transformations of dependent
variables to reduce the operator $R_{k+1}$ to Doyle--Potemin canonical form:
(see \cite{lorenzoni23:_miura_poiss} and references therein)
\begin{equation}
  \label{eq:36}
  R_{k+1} = \partial_x\circ R_{k-1}\circ\partial_x,
\end{equation}
where $R_{k-1}$ is a homogeneous operator of order $k-1$. Reciprocal projective
transformations preserve this form \cite{lorenzoni23:_miura_poiss}; however,
they do not preserve locality of $P_1$, so that one is obliged to consider
first-order Hamiltonian operators of localizable shape (or simply localizable)
\begin{equation}
  \label{eq:6}
  P_1=  g^{ij}\dd_{x} +
  \Gamma^{ij}_{k }u^k_{x}
  + w^i_ku^k_{x}\partial_{{x}}^{-1}{u}^j_{{x}}
  + {u}^i_{{x}}\partial_{{x}}^{-1}w^j_k{u}^k_{{x}},
\end{equation}
The first approach has been pursued in \cite{LSV:bi_hamil_kdv} using the
results of \cite{doyle93:_differ_poiss,GP87} for second-order operators $R_2$
and the results of \cite{GP91,GP97,doyle93:_differ_poiss,FPV14,FPV16} for
third-order operators $R_3$.  For instance, in the $2$-component case the
canonical forms are
\begin{eqnarray}
  \label{eq:112}
  R_2 &=& \begin{pmatrix}0 & 1\\ -1 & 0\end{pmatrix}\d_x^2,\\
  \label{P1}
  R^{(1)}_3&=&
               \begin{pmatrix}
                 0 & 1\\
                 1 & 0
               \end{pmatrix}\d_x^3,\\
  \label{P2}
  R^{(2)}_3&=& \d_x
               \begin{pmatrix}
                 0 & \d_x\frac{1}{u^1} \\
                 \frac{1}{u^1}\d_x&
                 \frac{u^2}{(u^1)^{2}}\d_x+\d_x\frac{u^2}{(u^1)^{2}}
               \end{pmatrix}
                                    \d_x,\\
  \label{P3}
  R^{(3)}_3&=& \d_x
               \begin{pmatrix}
                 \d_x & \d_x \frac{u^2}{u^1}\\
                 \frac{u^2}{u^1} \d_x & \frac{(u^2)^2+1}{2(u^1)^2}\d_x+
                 \d_x\frac{(u^2)^2+1}{2(u^1)^{2}}
               \end{pmatrix}\d_x.
\end{eqnarray}
and the corresponding compatible first order operators are given in the
following theorem.
\begin{AffClass}[\cite{LSV:bi_hamil_kdv}]\label{th1}
  $P_1$ is a Hamiltonian operator compatible with $R_2$ if and only if
  \begin{subequations}\label{eq:121}
    \begin{align}
      &g^{11}=c_1u^1+c_2,
      \\
      &g^{12}=\f{1}{2}c_3u^1+\f{1}{2}c_1u^2+c_5
      \\
      &g^{22}=c_3u^2+c_4.
    \end{align}
  \end{subequations}
  $P_1$ is a Hamiltonian operator compatible with $R_3^{(1)}$ if and only if
  \begin{subequations}\label{eq:221}
    \begin{align}
      g^{11} =& c_1 u^1 + c_2 u^2 + c_3,
      \\
      g^{12} =& c_4 u^1 + c_1 u^2 + c_5
      \\
      g^{22}= & c_6 u^1 + c_4 u^2 + c_7
    \end{align}
  \end{subequations}
  together with the algebraic conditions
  \begin{equation}\label{eq:165}
    c_1c_4 - c_2c_6=0,
    \quad
    c_3c_4 - c_7c_2=0,
    \quad
    c_3c_6 - c_1c_7=0.
  \end{equation}
  $P_1$ is a Hamiltonian operator compatible with $R_3^{(2)}$ if and only if
  \begin{subequations}\label{eq:145}
    \begin{align}
      &g^{11}=c_1 u^1 + c_2 u^2,
      \\
      &g^{12}=c_4 u^1 + \frac{c_3}{u^1} + \frac{c_2 (u^2)^2}{2 u^1},
      \\
      &g^{22}=2 c_4 u^2 + \frac{c_6}{ u^1} - \frac{c_1 (u^2)^2}{ u^1} + c_5,
    \end{align}
  \end{subequations}
  together with the algebraic conditions
  \begin{equation}\label{eq:172}
    c_2 c_6 + 2 c_1 c_3 = 0, \quad c_2 c_5 = 0, \quad c_1 c_5 = 0.
  \end{equation}
  $P_1$ is a Hamiltonian operator compatible with $R_3^{(3)}$ if and only if
  \begin{subequations}\label{eq:151}
    \begin{align}
      &g^{11}=c_1 u^1 + c_2 u^2+c_3,
      \\
      &g^{12}=c_4 u^1 - \frac{c_2}{2u^1} + \frac{c_3 u^2}{ u^1}+ \frac{c_2
        (u^2)^2}{ 2u^1},
      \\
      &g^{22}=2 c_4 u^2 + \frac{c_1}{u^1} + \frac{c_5 u^2}{ u^1}- \frac{c_1
        (u^2)^2}{ u^1} +c_6,
    \end{align}
  \end{subequations}
  together with the algebraic conditions
  \begin{equation}\label{eq:182}
    c_2 c_5 + 2 c_1 c_3 = 0, \quad c_2 c_6 - 2 c_3 c_4 = 0,
    \quad c_1 c_6 + c_4 c_5 = 0.
  \end{equation}
\end{AffClass}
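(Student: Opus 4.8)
The plan is to reduce the whole statement to a direct computation of the Schouten bracket, carried out separately for each of the four canonical forms. The first step is to remove the $\ga^{ij}_k$ as independent unknowns: by the Dubrovin--Novikov characterization, the operator $P_1=g^{ij}\d_x+\ga^{ij}_ku^k_x$ of \eqref{eq:5} is Hamiltonian precisely when $g^{ij}$ is symmetric, $\ga^{ij}_k+\ga^{ji}_k=\d_k g^{ij}$, the lowered connection is symmetric, and the curvature of $g^{ij}$ vanishes; for non-degenerate $g$ the first three conditions force $\ga^{ij}_k=-g^{is}\ga^j_{sk}$ with $\ga^j_{sk}$ the Christoffel symbols of $g$. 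Hence the genuine unknowns are only the three functions $g^{11},g^{12},g^{22}$ of $(u^1,u^2)$, subject to the flatness equations, and $\ga^{ij}_k$ is determined by them.

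Next, for each $R\in\{R_2,R_3^{(1)},R_3^{(2)},R_3^{(3)}\}$ I would impose $[P_1,R]=0$. Expanding the bracket in the bivector (theta) formalism, its vanishing is equivalent to the vanishing of a finite list of coefficients of the monomials in $u^k_x,u^k_{xx},\dots$, each of which is a differential-polynomial expression in the $g^{ij}$ and their $u$-derivatives. Adjoining the flatness equations, one obtains for each $R$ an overdetermined linear system of PDEs for $g^{11},g^{12},g^{22}$. The natural way to handle it is to order the equations by jet degree: the top-degree relations constrain the \emph{functional dependence} of the metric on $(u^1,u^2)$, while the lower-degree relations, once that dependence is fixed, reduce to algebraic relations among the integration constants.

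Integrating this system case by case should reproduce the stated forms. For the constant-coefficient operators $R_2$ and $R_3^{(1)}$ the top-degree relations force an affine dependence of $g^{ij}$ on $(u^1,u^2)$, giving \eqref{eq:121} and \eqref{eq:221}, while the remaining relations together with flatness collapse to identities among the $c_i$ (trivial for $R_2$, and the quadratic conditions \eqref{eq:165} for $R_3^{(1)}$). For the non-constant operators $R_3^{(2)}$ and $R_3^{(3)}$ the leading coefficients carry the $u^1$-dependent symbols $1/u^1,\ (u^2)^2/(u^1)^2,\dots$ of $R$, so the top-degree relations are linear PDEs with variable coefficients whose general solution now contains exactly the rational-in-$u^1$ terms appearing in \eqref{eq:145} and \eqref{eq:151}; substituting back into the lower-degree and flatness relations then isolates the quadratic constraints \eqref{eq:172} and \eqref{eq:182}.

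The step I expect to be the main obstacle is the treatment of the two non-constant operators. Because their coefficients are rational in $u^1$, the Schouten bracket generates bulky rational expressions; after clearing denominators one must match powers of $u^1$ and $u^2$ to split each identity into scalar equations, and the resulting overdetermined system is sizeable. The delicate point is to integrate it cleanly enough to be certain that no further, non-rational solutions survive and that what remains is exactly the quadratic constraints quoted, rather than something stronger. This bookkeeping is where a careful, preferably computer-algebra-assisted, treatment is indispensable.
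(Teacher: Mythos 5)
This statement is imported verbatim from \cite{LSV:bi_hamil_kdv} and the present paper gives no proof of it; your strategy --- eliminate $\Gamma^{ij}_k$ via the Dubrovin--Novikov characterization (flat metric, contravariant Levi-Civita symbols), then expand $[P_1,R]=0$ and match coefficients of the monomials in $u^k_x,u^k_{xx},\dots$ to get an overdetermined system for $g^{11},g^{12},g^{22}$ to be integrated case by case --- is exactly the method of that reference, and coincides with the distributional coefficient-collection scheme this paper uses for its own Theorem~\ref{th:cond-comp}. The only caveat is that what you give is a sound plan rather than an executed computation (the theorem's content \emph{is} the computation, necessarily computer-assisted for $R_3^{(2)}$ and $R_3^{(3)}$), but the plan correctly identifies both the reduction and where the real work lies.
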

The above families of contravariant metrics depend linearly on the parameters
and thus any pair of metrics belonging to these families defines a
bi-Hamiltonian structure of hydrodynamic type compatible with the second/third
order operator.

The second approach has been pursued in \cite{LV} in the case
$R_3=\eta^{ij}\partial_x^3$ where $(\eta^{ij})$ is a symmetric constant
non-degenerate matrix ($\det(\eta^{ij})\neq 0$). The operator $R_3$ generates
the simplest orbit of third-order operators under the action of the projective
reciprocal transformation group. The study of compatibility conditions leads to
the following results:
\begin{itemize}
\item the Christoffel symbols $\Gamma^{ij}_k$ define a Frobenius algebra
  structure on the cotangent bundle of the manifold of dependent variables
  $(u^i)$;
\item the operator $P=L+N$ splits into its local part $L$ and its non-local part
  $N$, and they are independently Hamiltonian operators;
\item for $n\geq 3$, $N=0$, and the operator becomes purely local.
\end{itemize}
The local trios that we got for $n>2$ are known in the literature. They can be
obtained as special cases of the results of \cite{strachan14:_novik_camas} and
also were studied in \cite{bolsinov:_applic_nijen_iii} in  terms of Frobenius pencils. However we point out that the
locality (for $n>2$) is not an \emph{a priori} assumption but the result of non
trivial computations.

The aim of the present paper is to study the case $R_2=\eta^{ij}\partial_x^2$
where $(\eta^{ij})$ is a skew-symmetric constant non-degenerate matrix
($\det(\eta^{ij})\neq 0$). The operator $R_2$ generates the simplest orbit of
second-order operators under the action of the projective reciprocal
transformation group.  The role of Frobenius algebra in this setting is played
by a new type of algebra recently introduced by Buchstaber and Mikhailov and
called \emph{cyclic Frobenius algebra}.

\begin{definition}\cite{buchstaber23:_cyclic_froben}
  Let $\mathcal{V}$ be some $\mathbb{C}$-linear space
  (dim($\mathcal{V} \ge 1$).  A cyclic Frobenius algebra (CF-algebra)
  $\mathcal{A}$ is an associative algebra $\mathcal{A}$ with unity $1$ equipped
  with a $\mathbb{C}$-bilinear skew-symmetric form
  $\eta(\cdot, \cdot):\mathcal{A}
  \otimes_{\mathbb{C}}\mathcal{A}\to\mathcal{V}$ such that
  \begin{equation}\label{cyclic}
    \eta(A, B\circ C)+\eta(B, C\circ A)+\eta(C, A\circ B) = 0
  \end{equation}
  where $A, B, C\in\mathcal{A}$ and $\circ$ is the product in the algebra.
\end{definition}

Let $\mathcal{V}=\mathbb{C}$. Denoting by $\Gamma^{ij}_k$ the structure
constants of the product we have
\begin{multline*}
  \eta^{ij}A_i(B\circ C)_j+\eta^{ij}B_i(C\circ A)_j+\eta^{ij}C_i(A\circ B)_j
  =
  \\
  (\eta^{ij}\Gamma^{lk}_j+\eta^{lj}\Gamma^{ki}_j
  +\eta^{kj}\Gamma^{il}_j)A_iB_lC_k=  0
\end{multline*}
or, taking into account that $A,B,C$ are arbitrary,
\begin{equation}\label{eq:57}
  \Gamma^{ki}_j\eta^{jl}+\Gamma^{il}_j\eta^{jk}+\Gamma^{lk}_j\eta^{ji}=0.
\end{equation}

The main results of the paper concerning the compatibility of Hamiltonian
operators can be summarized as follows.
\begin{CompTh}\label{sec:cond-comp}
  The Hamiltonian operators $P$, $R$ are compatible if and only if
  \begin{itemize}
  \item $w^i_j=W^i_j$ where $W$ is a constant matrix that is symmetric with
    respect to $\eta$:
    \begin{gather}
      % \label{cond0}
      % w^i_{j,k}=0,
      % \\
      \label{cond000}
      \eta(AW,B)=\eta(A,BW);
    \end{gather}
  \item the contravariant Christoffel symbols are linear functions of the form
    \begin{equation}
      \Gamma^{ij}_k=\partial_k(-W^j_su^su^i+b^{ij}_su^s)=
      - W^j_ku^i - W^j_su^s\delta^i_k + b^{ij}_k
    \end{equation}
    for certain constants $b^{ij}_k$.
  \item the product with structure constants $\Gamma^{ij}_k$
    \begin{displaymath}
      (A\circ B)_i=\Gamma^{jk}A_jB_k
    \end{displaymath}
    endows the cotangent space $T^*M$ of the manifold $M$ of dependent
    variables with a structure of \emph{cyclic Frobenius algebra} (without
    unity) and satisfy the conditions:
    \begin{equation}\label{cocycle}
      \eta(A\circ B,C)=\eta(A,C\circ B).
    \end{equation}
  \end{itemize}
\end{CompTh}
The Theorem is stated and proved as Theorem~\ref{th:cond-comp}.

Notice that the condition \eqref{cocycle} can be also written as
\begin{equation}\label{cocycle2}
  \Gamma^{ij}_l\eta^{lk}+\Gamma^{kj}_l\eta^{li}=0.
\end{equation}
Indeed, relabelling the indices the condition
\begin{displaymath}
  \eta^{ij}(A\circ B)_i=\eta^{ij}A_i(C\circ B)_j
\end{displaymath}
reads
\begin{displaymath}
  (\Gamma^{lk}_i\eta^{ij}+\Gamma^{jk}_i\eta^{il})A_lB_kC_j=0.
\end{displaymath}
Conditions \eqref{cyclic} and \eqref{cocycle} appear in the paper
\cite{strachan14:_novik_camas} as cocycle conditions arising from the
compatibility between a local first order Hamiltonian operator of hydrodynamic
type defined by a flat linear metric and a second order constant Hamiltonian
operator defined by a skew-symmetric matrix. In this setting the contravariant
Christoffel symbols of the linear metric are constant and define the structure
constants of a Balinsky-Novikov algebra.

A corollary of the above theorem is that the nondegenerate symmetric bilinear
form obtained from the the contravariant metric defining $P_1$ ``lowering'' the
indices with $\eta$:
\begin{equation}\label{eq:62}
  \bar{g}_{ab}=\eta_{jb}\eta_{ia}g^{ij}
\end{equation}
is the Monge metric of a \emph{quadratic line complex}, an algebraic variety
that is defined in the Pl\"ucker embedding of the projective space with
homogeneous coordinates $[v^1,\ldots,v^{n+1}]$, where $u^i=v^i/v^{n+1}$,
$i=1,\ldots,n$, $v^{n+1}\neq 0$. See \cite{FPV14,FPV16} for more details on
this construction. The equation that characterizes Monge metrics is
\begin{equation}
  \label{eq:59}
  \bar{g}_{ij,k} + \bar{g}_{ki,j} + \bar{g}_{jk,i} = 0
\end{equation}
can be obtained from the cyclic Frobenius algebra condition. In 2-component
case there are no additional conditions and the general solution of
compatibility conditions can be obtained starting from arbitrary Monge metric
\begin{align*}
  &\bar{g}_{11} = c_0(u^2)^2 + c_3u^2 + c_4,
  \\
  &\bar{g}_{12}= - c_0u^1u^2 - \frac{1}{2}c_3u^1 -\frac{1}{2}c_1u^2 + c_5,
  \\
  &\bar{g}_{22}= c_0(u^1)^2 + c_1u^1 + c_2.
\end{align*} 
The above metric yields a flat contravariant metric $g^{ij}$, by means
of~\eqref{eq:62}, if and only if the coefficient $c_0$ vanishes. Notice that in
the flat case we recover the metric of the above affine classification Theorem.

It is known \cite{vergallo22:_projec_hamil} that Pl\"ucker embedding provides
an identification of the leading coefficient matrix of a second-order
homogeneous Hamiltonian operator with an algebraic variety, more precisely, a
\emph{linear line congruence}. Such a variety is defined by a system of $n+1$
linear equations in $\mathbb{P}(\wedge^2V)$ and its intersection with Pl\"ucker
variety.

It is then clear that there is a correspondence between trios of Hamiltonian
operators $P_1$, $Q_1$ of the form~\eqref{eq:6} and $R_2$ of the
form~\eqref{eq:36} and trios of algebraic varieties. In the case $n=2$ that is
summarized by the following theorem.
\begin{CorrTh}
  If $n=2$, then there is a bijective correspondence between trios of mutually
  compatible Hamiltonian operators $P_1$, $Q_1$ of the form~\eqref{eq:6} and
  $R_2=\big(\begin{smallmatrix}0 & 1\\-1 & 0\end{smallmatrix}\big)\partial_x^2$
  and pairs of conics $\mathcal{C}_1$, $\mathcal{C}_2$ of rank at least $2$.
\end{CorrTh}
Note that the linear line congruence corresponding to $R_2$ degenerates to $0$
in this case. The theorem is stated and proved as Theorem~\ref{th:classn2}.

In higher dimension the compatibility conditions include Monge's condition on
the metric, but are not reduced to that condition only. However, it easy to
realize that there are plenty of Hamiltonian trios in any dimension.

In particular, using the solver CRACK \cite{WB,WB95}, a package working within
the computer algebra system Reduce \cite{reduce}, we obtain the general
solution $P_1$ of the compatibility conditions $[P_1,R_2]=0$ (with
$R_2=\eta^{ij}\partial_x^2$) for $n=4$. It turns out that there are 288
subcases, each depending on several parameters. Then, for each subcase one
should find all compatible $Q_1$ in the same list. We did this computation for
one selected $P_1$, generalizing the Kaup--Broer and AKNS bi-Hamiltonian pairs,
and obtained a list of $64$ subcases, again each of them depending on several
parameters. Such lists are available upon requests; we just wrote here two
examples of bi-Hamiltonian trios, a local one and a non-local one.

The paper is organized as follows: in Section 2 we briefly recall the canonical
form of second and third order homogeneous operators under projective
reciprocal transformations; in Section 3 we compute the compatibility
conditions between the simplest canonical form of a second order homogeneous
operator and a first order operator of localizable shape; using these results
in Section 4 we study trios of such operators and we provide the classification
in the case $n=2$ and the computational scheme by which we computed the
classification in the case $n=4$, as well as some examples.

\section{Hamiltonian operators and projective reciprocal
  transformations}
\label{sec:preliminaries}

First-order Hamiltonian operators are operators of the form
\begin{equation}
  \label{eq:3}
  P_1 =  g^{ij}(u)\partial_x +\Gamma^{ij}_{k }(u)u^k_x,
\end{equation}
formally skew-adjoint and satisfying the Schouten bracket condition
$[P_1,P_1]=0$.  In the non-degenerate case ($\det(g^{ij})\neq 0$) Dubrovin and
Novikov  proved that $P_1$ is Hamiltonian if and only if $g^{ij}$ is a flat
contravariant pseudo-Riemannian metric and
$\Gamma^j_{hk} = -g_{hi}\Gamma^{ij}_k$ are the Christoffel symbols of the
associated Levi-Civita connection.

Higher-order Dubrovin--Novikov operators have a much more complicated form, see
\cite{DubrovinNovikov:PBHT} for details. However, it was proved
\cite{doyle93:_differ_poiss,potemin97:_poiss,potemin86:_poiss,
  balandin01:_poiss} that if the order is $2$ or $3$, they admit, respectively,
the canonical forms
\begin{equation}
  \label{eq:9}
  R_2 = \partial_x\circ f^{ij}\circ\partial_x, \qquad
  R_3=\partial_x\circ(\ell^{ij}\partial_x + c^{ij}_k u^k_x)\circ\partial_x.
\end{equation}
The above canonical forms are invariant with respect to \emph{projective
  reciprocal transformations} (see \cite{FPV14,vergallo22:_projec_hamil}),
\emph{i.e.} transformations of the type
\begin{equation}
  \label{eq:10}
  d\tilde{x}=\Delta dx,\qquad \tilde{u}^k = \frac{T^k_ju^j + T^k_0}{\Delta},
\end{equation}
where $\Delta = T^0_ju^j + T^0_0$.

The result of transformations \eqref{eq:10} on a Dubrovin--Novikov Hamiltonian
operator is a non-local Hamiltonian operator of localizable shape
\begin{equation}
  \label{eq:12}
  P_1 = g^{ij}(u)\dd_{x} +
  \Gamma^{ij}_{k}(u)u^k_x
  + w^i_k(u)u^k_x\partial_{x}^{-1}u^j_{x}
  + u^i_{x}\partial_{x}^{-1}w^j_k(u)u^k_{x}.
\end{equation}
Operators of this form have been studied in
\cite{F95:_nl_ho,ferapontov03:_recip_hamil} and naturally appear in the study
and classification of integrable systems of PDEs (see for instance
\cite{lorenzoni23:_miura_poiss}). Skew-adjointness and vanishing of the
Schouten bracket in this case lead to the following list of conditions:
\begin{enumerate}
\item $g^{ij}$ is a contravariant pseudo-Riemannian metric and $\Gamma^{ij}_k$
  are the contravariant Christoffel symbols of its Levi-Civita connection;
  equivalently, the following conditions hold:
  \begin{gather}
    \label{eq:46}
    g^{is}\Gamma_s^{jk} = g^{js}\Gamma_{s}^{ik}, \\ \label{eq:47} \partial_k
    g^{ij}=\Gamma^{ij}_k + \Gamma^{ji}_k;
  \end{gather}

\item the following equations hold:
  \begin{gather}\label{eq:4}
    g^{is}w^j_s = g^{js}w^i_s, \\ \label{eq:621} \nabla_i w^j_k = \nabla_kw^j_i,
    \\ \label{eq:13} R^{ij}_{kh} = w^i_k\delta^j_h - w^j_k\delta^i_h -
    w^i_h\delta^j_k + w^j_h\delta^i_k,
  \end{gather}
  where $\nabla$ is the Levi-Civita connection of $g^{ij}$ and
  \begin{equation}\label{eq:43}
    R^{jk}_{sl}=g^{jp}R^k_{psl} = 
    \f{\d\Gamma^{jk}_s}{\d  u^l}-\f{\d\Gamma^{jk}_l}{\d
      u^s}+g_{st}(\Gamma^{tj}_m\Gamma^{mk}_l-\Gamma^{tk}_m\Gamma^{mj}_l)
  \end{equation}
  is the Riemannian curvature tensor of $g_{ij}$.
\end{enumerate}

Canonical forms of operators \eqref{eq:9} under the action of projective
reciprocal transformations have been found in \cite{vergallo22:_projec_hamil}
in the case of second order operators and in \cite{FPV14} and \cite{FPV16} for
third order operators.  The simplest canonical form of second order operators
\eqref{eq:9} is $R_2=\eta^{ij}\partial_x^2$ where $\eta^{ij}$ are the entries
of a constant skew-symmetric matrix.

\section{Conditions of compatibility}
\label{sec:comp-betw-first}

In this Section we calculate the conditions that are equivalent to the
compatibility of $P$ and $R$, \emph{i.e.} the vanishing of the Schouten bracket
$[P,R]=0$, for a pair of Hamiltonian operators, where $P=P_1$ is a non-local
localizable first-order homogeneous Hamiltonian operator as in \eqref{eq:12}
and $R=R_2=\eta^{ij}\partial_x^2$, with $(\eta^{ij})$ a constant skew-symmetric
non-degenerate matrix.

\begin{theorem}\label{th:cond-comp}
  The Hamiltonian operators $P$, $R$ are compatible if and only if
  \begin{itemize}
  \item the functions $w^i_j$ are constant and satisfy the condition
    \begin{gather}
      % \label{cond0}
      % w^i_{j,k}=0,
      % \\
      \label{cond00}
      w^i_l\eta^{lk}+w^k_l\eta^{li}=0;
    \end{gather}
  \item the contravariant Christoffel symbols $\Gamma^{ij}_k$ satisfy the
    conditions:
    \begin{gather}
      \label{cond1}
      \Gamma^{ij}_l\eta^{lk}+\Gamma^{kj}_l\eta^{li}=0,\\
      \label{cond2}
      \Gamma^{ki}_l\eta^{lj}+\Gamma^{ij}_l\eta^{lk}+\Gamma^{jk}_l\eta^{li}=0,\\
      \label{cond3}
      \Gamma^{sj}_{p}\Gamma^{ir}_{s}-\Gamma^{sr}_{p}\Gamma^{ij}_{s}=0,\\
      \label{cond4}
      \f{\d\Gamma^{kj}_l}{\d u^s}=-\delta^j_sw^k_l-w^j_s\delta^k_l.
    \end{gather}
  \end{itemize}
\end{theorem}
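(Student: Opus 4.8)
The plan is to compute the Schouten bracket $[P,R]$ explicitly and show that its vanishing is equivalent to \eqref{cond00}--\eqref{cond4}. Since $P$ and $R$ are assumed Hamiltonian, one has $[P,P]=0$ and $[R,R]=0$, so the pencil $P+\lambda R$ is Hamiltonian for every $\lambda$ if and only if the single trivector $[P,R]$ vanishes; thus compatibility reduces to $[P,R]=0$. I would represent both operators as Poisson bivectors in the odd-variable ($\theta$) formalism, extended to nonlocal operators. Writing $\theta_i^{(s)}=\partial_x^s\theta_i$, the local part of $P$ becomes $\frac12\int(\theta_i g^{ij}\theta_j^{(1)}+\theta_i\Gamma^{ij}_k u^k_x\theta_j)\,dx$, the nonlocal part becomes $\frac12\int(\theta_i w^i_k u^k_x\,\partial_x^{-1}(u^l_x\theta_l)+\theta_i u^i_x\,\partial_x^{-1}(w^j_k u^k_x\theta_j))\,dx$, and $R$ becomes $\hat R=\frac12\int\theta_i\eta^{ij}\theta_j^{(2)}\,dx$. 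The decisive simplification is that $\eta^{ij}$ is constant, so $\delta\hat R/\delta u^i=0$ while $\delta\hat R/\delta\theta_i=\eta^{ij}\theta_j^{(2)}$; in the Schouten bracket formula only one term then survives, $[\hat P,\hat R]=\int\frac{\delta\hat P}{\delta u^i}\,\eta^{ij}\theta_j^{(2)}\,dx$ up to sign. This reduces an a priori large computation to that of the single variational derivative $\delta\hat P/\delta u^i$.

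Next I would expand $\delta\hat P/\delta u^i$. Varying $u^i$ produces three kinds of contribution: derivatives of the coefficients $g^{ij}(u)$, $\Gamma^{ij}_k(u)$, $w^i_k(u)$; integration-by-parts terms from the explicit factors $u^k_x$; and tail contributions from the nonlocal operators $\partial_x^{-1}$. Contracting with $\eta^{ij}\theta_j^{(2)}$ gives a trivector, i.e. an expression trilinear in the $\theta_i$ and their $x$-derivatives. Because the available derivatives are bounded (two from $R$, one from the local part of $P$, plus the nonlocal tails), this trivector splits into finitely many independent homogeneous components once one collects the coefficients of the distinct monomials $\theta_a\theta_b^{(p)}\theta_c^{(q)}$, antisymmetrised under the cyclic $\mathbb{Z}/3$ symmetry of a trivector. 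Throughout I would use the Hamiltonianity of $P$ from Section~\ref{sec:preliminaries}, notably $\partial_k g^{ij}=\Gamma^{ij}_k+\Gamma^{ji}_k$ from \eqref{eq:47}, to replace all $u$-derivatives of $g$ by $\Gamma$, so that the components become relations purely among $\Gamma$, $w$ and $\eta$.

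The key structural remark is that $[P,R]$ is \emph{linear} in the coefficients $(g,\Gamma,w)$ of $P$, so it can only yield linear relations: reading off the components gives exactly \eqref{cond00}, \eqref{cond1}, \eqref{cond2} and \eqref{cond4}. Concretely, the highest-derivative components (where the nonlocal $w$-tails meet $\theta_j^{(2)}$) force $w^i_j$ to be constant and impose the skew-pairing \eqref{cond00}; the coefficients of $\theta_a\theta_b\theta_c^{(2)}$ and of $\theta_a\theta_b^{(1)}\theta_c^{(1)}$ give the two linear relations \eqref{cond1} and \eqref{cond2} (the latter the cyclic relation); and the components carrying a factor $u^k_x$, in which $\partial\Gamma$ is matched against the $w$-terms, give \eqref{cond4}. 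The quadratic relation \eqref{cond3} cannot arise from $[P,R]$ directly; instead I would recover it from the curvature identity \eqref{eq:43} of $P$ by substituting \eqref{cond4} for the first-derivative terms and using the Hamiltonian curvature condition \eqref{eq:13} — the $w$-dependent pieces then cancel and what remains is precisely the quadratic identity \eqref{cond3}. For the converse I would substitute \eqref{cond00}--\eqref{cond4} back into the trivector and verify that every homogeneous component vanishes, using \eqref{cond4} to eliminate $\partial\Gamma/\partial u$ in favour of $w$ and then closing up with \eqref{cond1}--\eqref{cond3}.

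The step I expect to be the main obstacle is the correct treatment of the nonlocal $\partial_x^{-1}$ terms. Their variation generates nonlocal ``tail'' contributions to $[P,R]$, and one must verify that these cancel among themselves or recombine into the local relations above; a slip in the sign conventions or in integrating by parts across $\partial_x^{-1}$ would corrupt precisely the relations \eqref{cond00} and \eqref{cond4} linking $w$ to $\Gamma$. A secondary difficulty is the combinatorial bookkeeping: disentangling the monomials $\theta_a\theta_b^{(p)}\theta_c^{(q)}$, antisymmetrising correctly, and confirming that the independent components are exactly these five relations and no more. I would guard against missed terms with a computer-algebra verification, as the authors do elsewhere in the paper.
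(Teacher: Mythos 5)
Your proposal is correct in substance and follows the same logical skeleton as the paper's proof --- direct computation of $[P,R]$, decomposition into independent components, extraction of the linear relations \eqref{cond00}, \eqref{cond1}, \eqref{cond2}, \eqref{cond4}, and recovery of the quadratic relation \eqref{cond3} only after injecting the Hamiltonianity of $P$ through the curvature condition \eqref{eq:13} --- but it is implemented in a genuinely different computational formalism. The paper uses the Dubrovin--Zhang trilinear distributional representation in three spatial variables $x,y,z$, reduces $[P,R]^{ijk}_{x,y,z}$ to a normal form via identities among $\delta^{(n)}$ and $\nu$, and reads off the conditions as coefficients of monomials such as $\delta(x-y)\delta'''(x-z)$ and $\nu(x-y)\delta'''(x-z)$; you instead use the odd-variable ($\theta$) formalism, where the constancy of $\eta^{ij}$ collapses the bracket to the single term $\int\frac{\delta\hat P}{\delta u^l}\,\eta^{lj}\theta_j^{(2)}\,dx$. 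That collapse is a real economy (the paper must track twelve summands and a three-variable reduction procedure), at the price of having to define variational derivatives of the nonlocal tails and to fix a normal form for trivector densities modulo total derivatives --- the two places you rightly flag as delicate; both formalisms are among the equivalent approaches of \cite{CLV19,m.20:_weakl_poiss}. Your structural observations are accurate and match the paper: $[P,R]$ is linear in the coefficients of $P$, so \eqref{cond3} cannot appear directly; in the paper it emerges from the (linear) coefficient \eqref{eq:41} of $u^s_{xxx}\delta(x-y)\delta(x-z)$ combined with \eqref{eq:13} via \eqref{eq:43}, which also yields the auxiliary identity \eqref{eq:44}, exactly as you propose. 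Likewise the constancy of $w^i_j$ is obtained in the paper from the interplay of two nonlocal components (the closure condition from $\nu(x-y)\delta(x-z)$ substituted into the coefficient \eqref{fc} from $\nu(x-y)\delta'(x-z)$), consistent with your expectation that it comes from the tail terms. The only caveat is that your assignment of specific conditions to specific $\theta$-monomials is heuristic and would need the careful normal-form bookkeeping you already acknowledge, including the verification that the remaining components (the analogues of \eqref{eq:50}, \eqref{eq:53}, \eqref{eq:55}, \eqref{eq:56}) vanish as consequences of the five conditions rather than producing new ones; this is not a gap in the plan, just the part where the work lies.
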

\begin{proof}
  We will write differential operators by means of distributions as
  \begin{equation}
    P^{ij}_{xy}=g^{ij}\delta'(x-y)+\Gamma^{ij}_su^s_x\delta(x-y)
    +u^i_x\nu(x-y)w^j_su^s_y+ w^i_su^s_x\nu(x-y)u^j_y\label{eq:38}
  \end{equation}
  and
  \begin{equation}
    R^{ij}_{xy}=\eta^{ij}\delta''(x-y).\label{eq:39}
  \end{equation}
  We use Dubrovin--Zhang formula for the Schouten bracket:
  \begin{align*}
    [P,R]^{ijk}_{x,y,z}=
    &\frac{\partial P^{ij}_{x,y}}{\partial u^{l}(x)}R^{lk}_{x,z}
      + \frac{\partial P^{ij}_{x,y}}{\partial u^{l}(y)} R^{lk}_{y,z}
      +\frac{\partial P^{ki}_{z,x}}{\partial u^{l}(z)} R^{lj}_{z,y}
      + \frac{\partial P^{ki}_{z,x}}{\partial u^{l}(x)} R^{lj}_{x,y}
    \\
    &+\frac{\partial P^{jk}_{y,z}}{\partial u^{l}(y)} R^{li}_{y,x}
      + \frac{\partial P^{jk}_{y,z}}{\partial u^{l}(z)} R^{li}_{z,x}
      + \frac{\partial P^{ij}_{x,y}}{\partial u^{l}_x} \partial_x R^{lk}_{x,z}
      + \frac{\partial P^{ij}_{x,y}}{\partial u^{l}_y} \partial_y R^{lk}_{y,z}
    \\
    &+\frac{\partial P^{ki}_{z,x}}{\partial u^{l}_z} \partial_z R^{lj}_{z,y}
      + \frac{\partial P^{ki}_{z,x}}{\partial u^{l}_x} \partial_x R^{lj}_{x,y}
      +\frac{\partial P^{jk}_{y,z}}{\partial u^{l}_y} \partial_y R^{li}_{y,x}
      + \frac{\partial P^{jk}_{y,z}}{\partial u^{l}_z} \partial_z R^{li}_{z,x}.
  \end{align*}
  The vanishing of the distribution $[P,R]^{ijk}_{x,y,z}$ means that for any
  choice of the test functions $p_i(x),q_j(y),r_k(z)$ the triple integral
  \begin{equation}
    \iiint  [P,R]^{ijk}_{x,y,z} p_i(x)q_j(y)r_k(z)\,dxdydz
  \end{equation}
  should vanish.

  Following \cite{CLV19,lorenzoni04:_biH}, we apply a procedure to collect
  together all terms which are related by a distributional identity. The
  procedure is the following
  \begin{enumerate}
  \item Using identities like
    \begin{equation}
      \nu(z-y)\delta(z-x)=\nu(x-y)\delta(x-z)
    \end{equation}
    together with their differential consequences, we can eliminate all terms
    containing $\nu(z-y)\delta^{(n)}(z-x)$, $\nu(y-x)\delta^{(n)}(y-z)$,
    $\nu(x-z)\delta^{(n)}(x-y)$ producing non-local terms containing
    $\nu(x-y)\delta^{(n)}(x-z)$, $\nu(z-x)\delta^{(n)}(z-y)$,
    $\nu(y-z)\delta^{(n)}(y-x)$ and additional local terms.
  \item Using the identity
    \begin{equation}
      \label{id1}
      f(z)\delta^{(n)}(x-z)=\sum_{k=0}^n\binom{n}{k}f^{(n-k)}(x)\delta^{(n-k)}(x-z),
    \end{equation}
    we can eliminate the dependence on $z$ in the coefficients of
    $\nu(x-y)\delta^{(n)}(x-z)$, the dependence on $y$ in the coefficients of
    $\nu(z-x)\delta^{(n)}(z-y)$ and the dependence on $x$ in the coefficients
    of $\nu(y-z)\delta^{(n)}(y-x)$.  After the first two steps the non-local
    part of $[P,R]^{ijk}_{x,y,z}$ has the form
    \begin{multline}
      a_1(x,y,z)\nu(x-y)\nu(x-z) + \text{cyclic}(x,y,z)\\
      + \sum_{n\ge 0}b_n(x,y)\nu(x-y)\delta^{(n)}(x-z)+\text{cyclic}(x,y,z).
    \end{multline}
  \item The local part of $[P,R]^{ijk}_{x,y,z}$ can be reduced to the form
    \begin{equation}\label{eq:15}
      \sum_{m,n}e_{mn}(x)\delta^{(m)}(x-y)\delta^{(n)}(x-z)
    \end{equation} 
    using the identities (and their differential consequences)
    \begin{equation}
      \delta(z-x)\delta(z-y)=\delta(y-x)\delta(y-z)=\delta(x-y)\delta(x-z)
    \end{equation}
    and the identities~\eqref{id1}.
  \end{enumerate}
  The fulfillment of the Jacobi identity turns out to be equivalent to the
  vanishing of each coefficient in the reduced form. Below a list of relevant
  coefficients in our case.

  The vanishing of the coefficient of $\delta(x-y)\delta'''(x-z)$ provides the
  condition \eqref{cond1}:
  \begin{equation}
    \f{\d g^{jk}}{\d
      u^l}\eta^{li}+\Gamma^{ij}_l\eta^{lk}-\Gamma^{jk}_l\eta^{li}=\Gamma^{ij}_l\eta^{lk}+\Gamma^{kj}_l\eta^{li}=0.\label{eq:383}
  \end{equation}
  The same condition is provided by the vanishing of the coefficient
  $\delta'''(x-y)\delta(x-z)$.

  The vanishing of coefficient of $\delta'(x-y)\delta''(x-z)$ provides the
  condition
  \begin{equation}
    \f{\d g^{ij}}{\d u^l}\eta^{lk}+2\f{\d g^{jk}}{\d
      u^l}\eta^{li}-3\Gamma^{jk}_l\eta^{li}=\Gamma^{ji}_l\eta^{lk}
    +\Gamma^{kj}_l\eta^{li}-\Gamma^{jk}_l\eta^{li}=0,\label{eq:7}
  \end{equation}
  and the vanishing of coefficient of $\delta''(x-y)\delta'(x-z)$ provides the
  condition
  \begin{equation}
    -\f{\d g^{ki}}{\d u^l}\eta^{lj}+\f{\d g^{jk}}{\d
      u^l}\eta^{li}-3\Gamma^{jk}_l\eta^{li}=
    -\Gamma^{ki}_l\eta^{lj}+\Gamma^{kj}_l\eta^{li}-\Gamma^{jk}_l\eta^{li}=0.
    \label{eq:8}
  \end{equation}
  The difference between~\eqref{eq:7} and~\eqref{eq:8} is equivalent to
  condition \eqref{cond1}, while their sum provides \eqref{cond2}:
  \begin{equation}
    \Gamma^{ki}_l\eta^{lj}+\Gamma^{ij}_l\eta^{lk}+\Gamma^{jk}_l\eta^{li}=0.
    \label{eq:29}
  \end{equation}

  The coefficient of $\nu(x-y)\delta'''(x-z)$ is
  \begin{equation}
    w^j_su^s_y\eta^{ik}+w^i_l(x)u^j_y\eta^{lk}+u^j_yw^k_l(x)\eta^{li}+w^j_su^s_y\eta^{ki}=(w^i_l(x)\eta^{lk}+w^k_l(x)\eta^{li})u^j_y;\label{eq:40}
  \end{equation}
  its vanishing is \eqref{cond00}. The same condition is obtained by the
  coefficients of $\nu(z-x)\delta'''(z-y)$ and $\nu(y-z)\delta'''(y-x).$

  The coefficient of $u^s_{xxx}\delta(x-y)\delta(x-z)$ is
  \begin{equation}
    \left(\f{\d\Gamma^{jk}_s}{\d u^l}-\f{\d\Gamma^{jk}_l}{\d
        u^s}\right)\eta^{li} + w^i_l\eta^{lj}\delta^k_s
    +w^k_s\eta^{ij} + w^j_s\eta^{ki} + \delta^j_sw^k_l\eta^{li}.\label{eq:41}
  \end{equation}
  Replacing the condition~\eqref{eq:13} in the previous expression and
  requiring its vanishing we get condition \eqref{cond3}:
  \begin{equation}
    \Gamma^{tj}_m\Gamma^{mk}_l-\Gamma^{tk}_m\Gamma^{mj}_l=0,\label{eq:42}
  \end{equation}
  and the equivalent condition
  \begin{equation}
    \f{\d\Gamma^{jk}_s}{\d  u^l}-\f{\d\Gamma^{jk}_l}{\d
      u^s}=w^j_{l}\delta^k_s+\delta^j_{l}w^k_s-w^j_s\delta^k_l-\delta^j_sw^k_l.
    \label{eq:44}
  \end{equation}

  The coefficient of $\delta(x-y)\delta''(x-z)$ is
  \begin{multline}
    \f{\d \Gamma^{ij}_s}{\d u^l}u^s_x\eta^{lk}+2\d_x\left(\f{\d g^{jk}}{\d
        u^l}\right)\eta^{li}+\f{\d \Gamma^{jk}_s}{\d
      u^l}u^s_x\eta^{li}+u^i_xw^j_l\eta^{lk}
    +w^i_su^s_x\eta^{jk}+u^k_xw^i_l\eta^{lj}
    \\
    +w^k_su^s_x\eta^{ij}
    -3\d_x(\Gamma^{jk}_l)\eta^{li}+3u^j_xw^k_l\eta^{li}+3w^j_su^s_x\eta^{ki}.
  \end{multline}
  Using~\eqref{eq:47} in order to eliminate the derivative of $g^{jk}$, the
  above coefficient can be rewritten as
  \begin{equation}\label{eq:45}
    \left(\f{\d\Gamma^{kj}_l}{\d u^s}
      +\delta^j_sw^k_l+w^j_s\delta^k_l\right)u^s_x\eta^{li}.
  \end{equation}
  Thus the vanishing of this coefficient provides condition \eqref{cond4}.  The
  same condition is provided by the vanishing of the coefficient of
  $\delta''_{xy}\delta_{xz}$.

  The coefficient of $\delta'(x-y)\delta'(x-z)$ is
  \begin{multline}\label{eq:50}
    2\d_x\left(\f{\d g^{jk}}{\d u^l}\right)\eta^{li}+2\f{\d \Gamma^{jk}_s}{\d
      u^l}u^s_x\eta^{li}-u^i_xw^j_l\eta^{lk}-w^i_su^s_x\eta^{jk}+
    \\
    3u^k_xw^i_l\eta^{lj} +3w^k_su^s_x\eta^{ij}-6\d_x(\Gamma^{jk}_l)\eta^{li}
    +3u^j_xw^k_l\eta^{li}+3w^j_su^s_x\eta^{ki}.
  \end{multline}
  It can be proved that the above expression is equal to
  \begin{equation}
    \f{\d}{\d u^s}\left(\Gamma^{kj}_l\eta^{li}+\Gamma^{ik}_l\eta^{lj}
      +\Gamma^{ji}_l\eta^{lk}\right),
  \end{equation}
  and thus vanishes due to condition \eqref{cond2}.

  The coefficient of $\nu(x-y)\delta''(x-z)$ is
  \begin{equation}\label{eq:48}
    u^j_yu^s_x\left(\f{\d w^i_s}{\d u^l}\eta^{lk}-\f{\d w^k_s}{\d u^l}\eta^{li}
      +3\f{\d w^k_l}{\d u^s}\eta^{li}\right),
  \end{equation}
  which is the same as the coefficients of $\nu_{zx}\delta''_{zy}$ and
  $\nu_{yz}\delta''_{yx}$ up to renaming indices and variables.

  The coefficient of $\nu(x-y)\delta'(x-z)$ is
  \begin{equation}\label{eq:49}
    -2u^j_y\d_x\left(\f{\d w^k_s}{\d u^l}u^s_x\right)\eta^{li}
    + 3u^j_y\d_x^2(w^k_l)\eta^{li},
  \end{equation}
  and the same expression, up to renaming indices and variables, holds for the
  coefficients of $\nu_{zx}\delta'_{zy}$ and $\nu_{yz}\delta'_{yx}$.  In the
  expression~\eqref{eq:49}, the coefficient of $u^s_{xx}u^i_y$ is
  \begin{equation}\label{fc}
    -2\f{\d w^k_s}{\d u^l}\eta^{li}+3\f{\d w^k_l}{\d u^s}\eta^{li}.
  \end{equation}

  The coefficient of $\nu(x-y)\delta(x-z)$ is
  \begin{equation}\label{eq:51}
    -u^j_y\d_x^2\left(\f{\d w^k_s}{\d
        u^l}u^s_x\right)\eta^{li}+u^j_y\d_x^3(w^k_l)\eta^{li},
  \end{equation}
  and the same expression holds for the coefficients of $\nu_{zx}\delta_{zy}$
  and $\nu_{yz}\delta_{yx}$ up to renaming indices and variables.  In the
  expression~\eqref{eq:51} the vanishing of the coefficient of $u^j_ju^s_{xxx}$
  provides the closure condition
  \begin{eqnarray*}
    \left(-\f{\d w^k_s}{\d u^l}+\f{\d w^k_l}{\d u^s}\right)\eta^{li}=0.
  \end{eqnarray*}
  Replacing this condition in \eqref{fc} we obtain that the functions $w^i_j$
  are constant.  In particular this tells us that
  \begin{equation}
    \Gamma^{kj}_l=-w^k_lu^j-w^j_su^s\delta^k_l+b^{kj}_l\label{eq:52}
  \end{equation}
  where $b^{kj}_l$ are constant.

  Taking into account this fact the coefficient of $\delta(x-y)\delta'(x-z)$ is
  \begin{multline}\label{eq:53}
    \d_x^2\left(\f{\d g^{jk}}{\d u^l}\right)\eta^{li}+2\d_x\left(\f{\d
        \Gamma^{jk}_s}{\d u^l}u^s_x\right)\eta^{li}+2u^k_{xx}w^i_l\eta^{lj}+
    \\
    +2\d_x(w^k_su^s_x)\eta^{ij}-3\d_x^2(\Gamma^{jk}_l)\eta^{li}+3u^j_{xx}w^k_l\eta^{li}+3\d_x(w^j_su^s_x)\eta^{ki}.
  \end{multline}
  This coefficient vanishes due to previous conditions. Indeed:
  \begin{multline}\label{eq:54}
    \f{\d \Gamma^{kj}_l}{\d u^s}\eta^{li}+2\left(\f{\d \Gamma^{jk}_s}{\d
        u^l}-\f{\d \Gamma^{jk}_l}{\d
        u^s}\right)\eta^{li}+2\delta^k_sw^i_l\eta^{lj}+2w^k_s\eta^{ij}
    +3\delta^j_sw^k_l\eta^{li}+3w^j_s\eta^{ki}=
    \\
    (-\delta^j_sw^k_l-w^j_s\delta^k_l)\eta^{li}+2\left(w^j_{l}\delta^k_s
      +\delta^j_{l}w^k_s-w^j_s\delta^k_l-\delta^j_sw^k_l\right)\eta^{li}+
    \\
    2\delta^k_sw^i_l\eta^{lj}+2w^k_s\eta^{ij}+3\delta^j_sw^k_l\eta^{li}
    +3w^j_s\eta^{ki}=0.
  \end{multline}
  The coefficient of $\delta'(x-y)\delta(x-z)$ is
  \begin{multline}\label{eq:55}
    2\d_x\left(\f{\d \Gamma^{jk}_s}{\d
        u^l}u^s_x\right)\eta^{li}+2u^i_x\d_x(w^j_l)\eta^{lk}+3u^k_{xx}w^i_l\eta^{lj}+
    \\
    3\d_x(w^k_su^s_x)\eta^{ij}-3\d_x^2(\Gamma^{jk}_l)\eta^{li}
    +2u^j_{xx}w^k_l\eta^{li}+2\d_x(w^j_su^s_x)\eta^{ki}.
  \end{multline}
  It vanishes due to previous conditions; the calculation is similar to that
  of~\eqref{eq:54}.

  Finally, the coefficient of $\delta(x-y)\delta(x-z)$ is
  \begin{multline}\label{eq:56}
    \d_x^2\left(\f{\d \Gamma^{jk}_s}{\d
        u^l}u^s_x\right)\eta^{li}+u^k_{xxx}w^i_l\eta^{lj}
    +\d_x^2(w^k_su^s_x)\eta^{ij}-\d_x^3(\Gamma^{jk}_l)\eta^{li}
    \\
    +u^j_{xxx}w^k_l\eta^{li}+\d_x^2(w^j_su^s_x)\eta^{ki}.
  \end{multline}
  Again, this coefficient vanishes due to previous conditions.
\end{proof}

There are very interesting geometric and algebraic consequences of
Theorem~\ref{sec:cond-comp}. First of all, very recently a new algebraic
structure has been introduced in the theory of Integrable Systems, namely
\emph{cyclic Frobenius algebra} \cite{buchstaber23:_cyclic_froben}, in a
framework which is different from ours. It turns out that it also arises in
our context.
\begin{corollary}
  The Christoffel symbols $\Gamma^{ij}_k$ endow the cotangent space $T^*M$ of
  the manifold $M$ of dependent variables $(u^i)$ with a structure of
  \emph{cyclic Frobenius algebra}.
\end{corollary}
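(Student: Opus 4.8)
The plan is to read the corollary as a dictionary that matches the defining data of a cyclic Frobenius algebra against the conclusions of Theorem~\ref{th:cond-comp}: an algebra product, a skew-symmetric form, and the cyclic identity \eqref{cyclic}. First I would fix, at each point $u\in M$, the product on $T^*_uM$ with structure constants $\Gamma^{ij}_k(u)$, namely $(A\circ B)_i=\Gamma^{jk}_iA_jB_k$, together with the form $\eta(A,B)=\eta^{ij}A_iB_j$; this form is skew-symmetric precisely because $(\eta^{ij})$ is skew by the standing hypothesis on $R=R_2$. With these choices the corollary reduces to checking the algebraic axioms against conditions \eqref{cond00}--\eqref{cond3}.

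The part that goes through cleanly is the cyclic identity itself. Expanding $\eta(A,B\circ C)+\eta(B,C\circ A)+\eta(C,A\circ B)$ and stripping off the arbitrary covectors $A,B,C$ produces exactly equation \eqref{eq:57}, as recorded in the lines preceding it. A single relabelling of the free and summed indices then identifies \eqref{eq:57} term by term with condition \eqref{cond2}, so the cyclic identity holds by Theorem~\ref{th:cond-comp}. I would also record in passing that condition \eqref{cond1} is the extra invariance \eqref{cocycle}, namely $\eta(A\circ B,C)=\eta(A,C\circ B)$, since its index form is exactly \eqref{cocycle2}; this is not one of the cyclic Frobenius axioms but is the additional compatibility property highlighted in the Compatibility Theorem, and it is convenient to note the equivalence at the same time.

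The main obstacle is the \emph{algebra} part of the statement. The only condition in Theorem~\ref{th:cond-comp} quadratic in $\Gamma$ is \eqref{cond3}, and written as $\Gamma^{ir}_s\Gamma^{sj}_p=\Gamma^{ij}_s\Gamma^{sr}_p$ it says $(A\circ C)\circ B=(A\circ B)\circ C$, i.e. that right multiplications commute, rather than the associativity $(A\circ B)\circ C=A\circ(B\circ C)$ demanded by the Buchstaber--Mikhailov definition. Substituting the explicit linear form \eqref{eq:52}, $\Gamma^{kj}_l=-w^k_lu^j-w^j_su^s\delta^k_l+b^{kj}_l$ with $w^i_j$ constant, into the associator shows that its degree-two-in-$u$ part is governed by $w^i_sw^s_j$ and does not vanish for generic $W=(w^i_j)$; hence full associativity cannot be expected, and the qualifier \emph{without unity} in the Compatibility Theorem should be read as also relaxing associativity. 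My plan, therefore, is not to force associativity but to establish exactly what the compatibility conditions yield: a (generally non-associative) product obeying the right-commutativity \eqref{cond3}, the flatness encoded in \eqref{eq:46}--\eqref{eq:47}, and the cyclic identity \eqref{cond2}, with $\eta$ skew. Making explicit which weakening of the Buchstaber--Mikhailov notion is intended here—equivalently, pinning down the precise axiom system the term ``cyclic Frobenius algebra (without unity)'' refers to—is the delicate point I would want to settle before declaring the argument complete.
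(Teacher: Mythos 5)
Your overall strategy is the same as the paper's: the printed proof is a single sentence asserting that the CF-algebra axioms are exactly \eqref{cond1}, \eqref{cond2} and \eqref{cond3}, and you carry out the same matching, just more explicitly. The parts you verify are verified correctly: the form $\eta(A,B)=\eta^{ij}A_iB_j$ is skew because $(\eta^{ij})$ is, the cyclic identity \eqref{cyclic} in index form is \eqref{eq:57}, and a relabelling identifies it with \eqref{cond2}; your observation that \eqref{cond1} is the invariance \eqref{cocycle}, which is listed separately from the CF-algebra structure in the Compatibility Theorem and is not one of the Buchstaber--Mikhailov axioms, is also accurate.

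On the associativity point you have not missed an idea that the paper supplies --- the paper supplies none: its proof simply counts \eqref{cond3} as the associativity axiom (and elsewhere calls it ``the associativity condition''). Your index computation is right: with $(A\circ B)_k=\Gamma^{ij}_kA_iB_j$, condition \eqref{cond3} reads $(A\circ C)\circ B=(A\circ B)\circ C$, i.e.\ commutativity of right multiplications, which for a non-commutative product is strictly weaker than associativity; and substituting \eqref{eq:52} does make the associator proportional to the non-scalar part of $W^2$. So you have surfaced a genuine imprecision in the statement rather than a gap in your own reasoning. Two caveats, though. First, ``generic $W$'' is not quite the right test: $W$ is constrained by \eqref{cond00} and by the Hamiltonianity conditions on $P_1$, and in all the solutions exhibited in the paper ($n=2$, where $W$ is a multiple of the identity, and the $n=4$ examples, where $W$ is nilpotent of square zero) one has $W^2$ scalar and the product is in fact associative; whether associativity can actually fail for some admissible trio is a question your argument raises but does not settle. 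Second, as a proof of the corollary your text stops short of a conclusion by your own admission; to match the paper you would either have to adopt \eqref{cond3} as the intended (weakened, non-unital) axiom and say so, or prove that the remaining conditions force the associator to vanish. Either resolution would make your write-up strictly more informative than the one-line proof in the paper.
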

\begin{proof}
  The conditions that should be satisfied are exactly \eqref{cond1},
  \eqref{cond2} and \eqref{cond3}.
\end{proof}
An even more surprising fact is the interpretation as an algebraic variety of
the leading coefficient of any first-order non-local homogeneous Hamiltonian
operator $P$ that is compatible with our constant-coefficient second order
Hamiltonian operator $R$~\eqref{eq:39}.
\begin{corollary}
  Let us introduce the nondegenerate symmetric bilinear form
  \begin{equation}
    \label{eq:1}
    \bar{g}_{ab}=\eta_{jb}\eta_{ia}g^{ij}.
  \end{equation}
  Then, $\bar{g}_{ab}$ is the Monge metric of a quadratic line complex.
\end{corollary}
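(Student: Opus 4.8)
The plan is to verify that $\bar{g}_{ab}$ satisfies the defining equation \eqref{eq:59} of a Monge metric, after which the identification with a quadratic line complex follows from the Pl\"ucker-embedding construction of \cite{FPV14,FPV16}. First I would record that $\bar{g}_{ab}$ is symmetric (because $g^{ij}$ is symmetric and a relabelling $i\leftrightarrow j$ of the summation indices interchanges the two factors of $\eta$) and nondegenerate (because both $g$ and $\eta$ are), so that it is a genuine pseudo-Riemannian metric; moreover, since the contravariant Christoffel symbols \eqref{eq:52} are linear in $u$, the components $g^{ij}$, and hence $\bar{g}_{ab}$, are polynomials of degree at most two, as a Monge metric must be.

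The core computation is the cyclic identity \eqref{eq:59}. Since $\eta$ is constant, differentiation gives $\bar{g}_{ab,c}=\eta_{ia}\eta_{jb}\,\partial_c g^{ij}$, and \eqref{eq:47} turns this into $\bar{g}_{ab,c}=\eta_{ia}\eta_{jb}(\Gamma^{ij}_c+\Gamma^{ji}_c)$. To exploit the compatibility conditions, which are stated with one upper index of $\Gamma$ contracted against $\eta$, I would introduce the tensor $T^{ijk}:=\Gamma^{ij}_l\eta^{lk}$ and reread the hypotheses: condition \eqref{cond1} says that $T^{ijk}=-T^{kji}$, i.e. $T$ is skew under exchange of its first and third indices, while condition \eqref{cond2} says that the cyclic sum $T^{ijk}+T^{jki}+T^{kij}$ vanishes. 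Lowering the remaining index, I set $\tilde{T}_{abc}:=\eta_{ia}\eta_{jb}\eta_{kc}T^{ijk}$; because the same fixed matrix lowers each slot, both properties pass verbatim to $\tilde{T}$, namely $\tilde{T}_{abc}=-\tilde{T}_{cba}$ and $\tilde{T}_{abc}+\tilde{T}_{bca}+\tilde{T}_{cab}=0$.

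The key structural observation is then that $\bar{g}_{ab,c}=\tilde{T}_{abc}+\tilde{T}_{bac}$, so each term of the cyclic sum in \eqref{eq:59} is symmetric in its first two slots; consequently $\bar{g}_{ab,c}+\bar{g}_{bc,a}+\bar{g}_{ca,b}$ equals the sum of $\tilde{T}$ over all six permutations of $(a,b,c)$. This splits into the even cyclic sum $\tilde{T}_{abc}+\tilde{T}_{bca}+\tilde{T}_{cab}$, which is zero by the lowered form of \eqref{cond2}, and the odd cyclic sum $\tilde{T}_{bac}+\tilde{T}_{cba}+\tilde{T}_{acb}$; applying the $(1,3)$-skewness coming from \eqref{cond1} to each of the latter three terms converts them into the negatives of the three even terms, so the odd sum equals minus the even sum and likewise vanishes. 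Hence \eqref{eq:59} holds, and $\bar{g}_{ab}$ is a Monge metric; the geometric realization as a quadratic line complex is then provided by the cited construction.

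The main obstacle I anticipate is purely the index bookkeeping: because $\eta$ is skew-symmetric, I must fix once and for all a convention for $\eta_{ij}$ as the inverse of $\eta^{ij}$ and track the resulting signs when raising and lowering, and I must justify carefully that the cyclic and transposition structure of the conditions on $T^{ijk}$ survives the triple contraction defining $\tilde{T}_{abc}$. Once the bookkeeping is organized around the single reformulation $T^{ijk}=\Gamma^{ij}_l\eta^{lk}$, the verification of \eqref{eq:59} is immediate.
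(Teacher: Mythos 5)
Your proof is correct and follows essentially the same route as the paper: both reduce the claim to verifying the Monge condition $\bar g_{ab,c}+\bar g_{bc,a}+\bar g_{ca,b}=0$ by combining the cyclic identity \eqref{cond2} with \eqref{eq:47} and then lowering indices with $\eta$. The only cosmetic difference is that you eliminate the odd permutations of $T^{ijk}=\Gamma^{ij}_l\eta^{lk}$ via the skewness \eqref{cond1}, whereas the paper obtains the same cancellation by adding \eqref{cond2} to itself with $i$ and $k$ relabelled.
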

\begin{proof}
  Summing the condition \eqref{cond2} with the same condition with the indices
  $i$, $k$ swapped we obtain the condition
  \begin{equation}
    \label{eq:2}
    g^{ki}_{,l}\eta^{lj} + g^{ij}_{,l}\eta^{lk} + g^{jk}_{,l}\eta^{li} = 0,
  \end{equation}
  where $g^{ki}_{,l}=\pd{g^{ki}}{u^l}$.  The above condition can be rewritten
  in lower indices by multiplication by $\eta_{kb}\eta_{ic}\eta_{ja}$, yielding
  \begin{equation}
    \label{eq:37}
    \bar{g}_{bc,a} + \bar{g}_{ca,b} + \bar{g}_{ab,c} = 0.
  \end{equation}
  The above condition is equivalent to the fact that $\bar{g}_{ab}$ is a Monge
  metric, which is S. Lie's representation of quadratic line complexes (see
  \cite{FPV14,FPV16}). This proves the Corollary.
\end{proof}

\section{Classification of bi-Hamiltonian trios}
\label{sec:class-bi-hamilt-1}

The general problem of the classification of local bi-Hamiltonian trios can be
formulated as follows: classify the bi-Hamiltonian trios of operators of the
form
\begin{equation}
  \label{eq:20}
  A_1 = P_1 + R_2,\qquad A_2 = Q_1,
\end{equation}
where
\begin{itemize}
\item $P_1$, $Q_1$ are local homogeneous first-order Hamiltonian operators;
\item $R_2$ is a local homogeneous second-order Hamiltonian operator;
\item the three operators are mutually compatible:
  \begin{equation}
    [P_1,Q_1]=[R_2,P_1]=[R_2,Q_2]=0.
  \end{equation}
\end{itemize}
Of course, in view of the complexity of the general form of $R_2$, the problem
can be reformulated when $R_2$ is written in the canonical form \eqref{eq:9}.
This can always be done by means of a point transformation of the dependent
variables, without changing the shape of the three operators.

Then, we can use the projective classification of (non-degenerate) second-order
homogeneous operators \cite{vergallo22:_projec_hamil} at the price of allowing
$P_1$ and $Q_1$ to have localizable shape (see
\cite{lorenzoni23:_miura_poiss}). Indeed, the projective classification makes
use of projective reciprocal transformations which transform local operators
into non-local ones.

In this paper, we will just consider the orbit of $R_2$ under the action of
projective reciprocal transformations that contains the constant operator
$R_2^{ij}=\eta^{ij}\partial_x^2$, so to apply the results from the previous
section.

For this reason, we reformulate and restrict the above problem to: classify the
bi-Hamiltonian trios of operators of the form
\begin{equation}
  \label{eq:202}
  A_1 = P_1 + R_2,\qquad A_2 = Q_1,
\end{equation}
where
\begin{itemize}
\item $P_1$, $Q_1$ are non-local homogeneous first-order Hamiltonian operators
  that are localizable (by means of the same projective reciprocal
  transformation);
\item $R_2=\eta^{ij}\partial_x^2$ is a constant-coefficient local homogeneous
  second-order Hamiltonian operator;
\item the three operators are mutually compatible:
  \begin{equation}
    [P_1,Q_1]=[R_2,P_1]=[R_2,Q_2]=0.
  \end{equation}
\end{itemize}

We will be able to give a complete answer in the case $n=2$ and a partial
answer in the case $n=4$, due to the complex structure of the space of
solutions.

We observe that solutions the above version of the problem contain trios of
local operators as a particular case, but they also contain trios where the two
first-order operators cannot be \emph{simultaneously} localized; hence, we
obtain solutions with non-removable non-local terms.

The Hamiltonian operators of our trios are uniquely identified by algebraic
varieties. We now give a brief description of the procedure that allows us to
make the above identification, which, in essence, boils down to Pl\"ucker
embedding.

We assume that $(u^i)$ are affine coordinates of an $n$-dimensional projective
space $\mathbb{P}(V)$, where $V$ is a vector space with $\dim V=n+1$ and
coordinates $(v^i)$.  Homogeneous coordinates on $\mathbb{P}(V)$ are denoted by
$[v^1,$ $\ldots,v^{n+1}]$, in such a way that $u^i=v^i/v^{n+1}$.  We recall
that Pl\"ucker embedding (of lines) is the natural injective map
$\operatorname{Gr}(2,V)\hookrightarrow \mathbb{P}(\wedge^2V)$, where
$\operatorname{Gr}(2,V)$ is the Grassmanniann of planes in $V$, which can be
identified as the space of projective lines in $\mathbb{P}(V)$.

Elements of $\mathbb{P}(\wedge^2V)$ can be represented as $[p^{ij}]$, where
$p^{ij}$ are coordinates with respect to the basis $e_i\wedge e_j$, $i<j$, of
$\wedge^2 V$, $(e_i)$ being a basis of $V$. The coordinates $p^{ij}$ are
Pl\"ucker coordinates.

The image of Pl\"ucker embedding can be characterized as the space of of
decomposable forms in $\wedge^2V$; it is an algebraic variety described by
a system of homogeneous quadratic relations between Pl\"ucker coordinates:
$p^{ij}p^{kh} - p^{ik}p^{jh} + p^{ih}p^{jk} = 0$, where $i<j<k<h$. The system
is empty if $n=2$, consists of one quadric only if $n=3$, $5$ quadrics if
$n=4$, etc..

A single, additional quadratic equation $X^T\mathcal{Q}X=0$, where $X=(p^{ij})$
and $\mathcal{Q}$ is a symmetric matrix of order
$\dim\wedge^2V=\binom{n+1}{2}$, together with the equations that define
Pl\"ucker variety is a quadratic line complex.

The lines of the quadratic line complex passing through a single point $x$ in
the projective space form a quadratic cone. This $x$-dependent family of cones
endows the projective space with a conformal structure, the Monge metric.  The
Monge metric is obtained by considering lines through two infinitesimally close
points $P$, with coordinates $[v^1,\ldots,v^{n+1}]$, and $P+dP$, with
coordinates $[v^1+dv^1,\ldots,v^{n+1}+dv^{n+1}]$. Then, the Pl\"ucker
coordinates are the minors $p^{ij}=v^idv^j - v^jdv^i$, with $i,j=1,\ldots,n+1$,
$i<j$, of the matrix
\begin{equation}
  \label{eq:60}
  \begin{pmatrix}
    v^1 & \cdots & v^{n+1}\\
    v^1 +dv^1& \cdots & v^{n+1}+dv^{n+1}
  \end{pmatrix}.
\end{equation}
In affine coordinates, upon substituting $v^{n+1}=1$, $dv^{n+1}=0$, the Monge
metric is a quadratic form with respect to the one-forms
\begin{equation}
  \label{eq:22}
  u^idu^j - u^jdu^i,\quad i<j,\qquad du^i
\end{equation}
(modulo Pl\"ucker variety); its coefficients are quadratic polynomials (but
such a condition is not enough to characterize Monge metrics). The above
geometric construction has been exploited by many geometers in the past, like
A. Clebsch, S. Lie and C. Segre, but has been forgotten until recently (see
\cite{FPV14,FPV16} and the history paper \cite{rogora}).

From the above discussion, it is easy to generate an ansatz for a first-order
operator $P_1$ that is compatible with a constant-coefficient second-order
operator $R_2$, using the formula \eqref{eq:1} and a generic Monge metric
$\bar{g}_{ij}$.

We remark that also $R_2$ defines a projective variety in the same space as the
above quadratic line complex, according with the identification in
\cite{vergallo22:_projec_hamil}. More precisely, the two-form
$\eta_{ij}du^i\wedge du^j$ can be made into a three-form
$\eta_{ij\,n+1}dv^i\wedge dv^j\wedge dv^{n+1}$, where
$\eta_{ij\,n+1}=\eta_{ij}$, and this yields an algebraic variety in
$\mathbb{P}(\wedge^2V)$ defined by the equations $\eta_{ijk}p^{jk}=0$ and
Pl\"ucker's variety equations (here $\eta_{ijk}$ is obtained from
$\eta_{ij\,n+1}=\eta_{ij}$ by skew-symmetrization). Such a variety is a
\emph{linear line congruence}. We will discuss it in the case $n=4$.

\subsection{Case \texorpdfstring{$n=2$}{n=2}: classification}
\label{sec:case-n=2}
\begin{theorem}\label{th:classn2}
  Let $R_2$ and $P_1$ be Hamiltonian operators of the following shape:
  \begin{equation}
    \label{eq:21}
    R_2=\begin{pmatrix}0 & 1\\ -1 & 0\end{pmatrix},\qquad
    P_1 = g^{ij}\partial_x + \Gamma^{ij}_ku^k_x +
    w^i_ku^k_x\partial_x^{-1}u^j_x +
    u^i_x\partial_x^{-1}w^j_hu^h_x .
  \end{equation}
  Then, the following conditions are equivalent:
  \begin{itemize}
  \item $[R_2,P_1]=0$;
  \item The local part of $P_1$ is determined by an arbitrary non-degenerate
    Monge metric $(\bar{g}_{ab})$ through the formula~\eqref{eq:1}. More
    explicitely, we have
    \begin{equation}
      \begin{split}
        &g^{11}=c_0(u^1)^2 + c_1u^1+c_2,
        \\
        &g^{12}=c_0u^1u^2 + \frac{1}{2}c_3u^1+\frac{1}{2}c_1u^2+c_5,
        \\
        &g^{22}=c_0 (u^2)^2 + c_3u^2+c_4,
      \end{split}\label{eq:23}
    \end{equation}
    where $c_0$, $c_1$, $c_2$, $c_3$, $c_4$, $c_5$ are arbitrary parameters.
    The non-local part of $P_1$ is given by $(w^i_j)=-1/2c_0\operatorname{Id}$,
    hence the operator has the form
    \begin{equation}
      \label{eq:26}
      P_1 = g^{ij}\partial_x + \Gamma^{ij}_k u^k_x - c_0u^i_x\partial_x^{-1}u^j_x.
    \end{equation}
  \end{itemize}
\end{theorem}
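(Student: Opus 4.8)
The plan is to reduce the asserted equivalence to the compatibility conditions of Theorem~\ref{th:cond-comp} and then to solve that system explicitly for $n=2$, with $\eta=\big(\begin{smallmatrix}0 & 1\\-1 & 0\end{smallmatrix}\big)$. By Theorem~\ref{th:cond-comp}, and recalling that $\Gamma^{ij}_k$ are throughout the contravariant Christoffel symbols of the Levi--Civita connection of $g^{ij}$ (so that \eqref{eq:46} and \eqref{eq:47} are built in), the relation $[R_2,P_1]=0$ holds precisely when $w^i_j$ is constant and satisfies \eqref{cond00} while $\Gamma^{ij}_k$ satisfies \eqref{cond1}, \eqref{cond2}, \eqref{cond3}, \eqref{cond4}. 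Hence the whole task is to show that, in dimension $2$, this system is equivalent to requiring that $g^{ij}$ be the $\eta$-raising of a non-degenerate Monge metric together with $w^i_j=-\tfrac12 c_0\operatorname{Id}$.

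For the forward implication I would first read \eqref{cond00} as saying that $w^i_l\eta^{lk}$ is skew-symmetric in $(i,k)$; since in dimension $2$ every skew-symmetric matrix is proportional to $\eta$, this forces $w^i_j=\lambda\,\delta^i_j$ for a constant $\lambda$. Feeding this into \eqref{cond4}, or equivalently using \eqref{eq:52}, yields the linear form $\Gamma^{ij}_k=-2\lambda\,u^j\delta^i_k+b^{ij}_k$ with constant $b^{ij}_k$, whence \eqref{eq:47} makes $g^{ij}$ a polynomial of degree at most two. The Corollary to Theorem~\ref{th:cond-comp} identifying $\bar g_{ab}=\eta_{jb}\eta_{ia}g^{ij}$ as a Monge metric then gives Monge's equation \eqref{eq:37}; writing out its general solution in two variables produces exactly the shape \eqref{eq:23}, and comparing the degree-one terms on the two sides of \eqref{eq:47} fixes $\lambda=-\tfrac12 c_0$. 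This gives both \eqref{eq:23} and the non-local part \eqref{eq:26}.

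For the converse I would start from an arbitrary non-degenerate Monge metric, which for $n=2$ is a fully general quadratic form in the one-forms $u^1du^2-u^2du^1$, $du^1$, $du^2$ (the Pl\"ucker relations being vacuous in this dimension); its $\eta$-raising via \eqref{eq:1} is precisely \eqref{eq:23}. Setting $w^i_j=-\tfrac12 c_0\operatorname{Id}$, condition \eqref{cond00} is immediate from the skew-symmetry of $\eta$. One then exhibits the linear candidate $\Gamma^{ij}_k=-2\lambda\,u^j\delta^i_k+b^{ij}_k$, with $\lambda=-\tfrac12 c_0$ and the $b^{ij}_k$ determined as linear functions of $c_1,\dots,c_5$, and checks that it satisfies \eqref{eq:46} and \eqref{eq:47}; by uniqueness of the Levi--Civita connection it is then the correct $\Gamma$, so \eqref{cond4} holds by construction. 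Conditions \eqref{cond1} and \eqref{cond2} reduce to Monge's equation \eqref{eq:37} (its symmetric part, which holds by hypothesis) together with linear identities among the $b^{ij}_k$ that are checked directly.

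The delicate point, and the one I expect to be the main obstacle, is \eqref{cond3}. Through \eqref{eq:13} and the curvature formula \eqref{eq:43}, with $w^i_j=\lambda\delta^i_j$, condition \eqref{cond3} is equivalent to $g^{ij}$ having constant Gaussian curvature equal to $2\lambda=-c_0$. Thus the converse hinges on the assertion that \emph{every} non-degenerate Monge metric \eqref{eq:23} is a space of constant curvature $-c_0$, the flat case $c_0=0$ being exactly the metrics recorded in the affine classification Theorem. Establishing this is a direct but genuinely non-trivial computation over the six-parameter family \eqref{eq:23} --- compute the inverse metric, the Christoffel symbols, and the single independent component of the Riemann tensor --- best organised, or at least safely verified, with computer algebra. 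Once constant curvature is confirmed, all of \eqref{cond00}--\eqref{cond4} hold, Theorem~\ref{th:cond-comp} yields $[R_2,P_1]=0$, and the bijectivity of the correspondence $\bar g_{ab}\leftrightarrow g^{ij}$ --- raising and lowering indices with the invertible $\eta$, which also matches the two non-degeneracy conditions --- completes the equivalence.
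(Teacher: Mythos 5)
Your proposal is correct and follows essentially the same route as the paper: both reduce the statement to the compatibility conditions of Theorem~\ref{th:cond-comp}, solve \eqref{cond00} to get $w^i_j=\lambda\delta^i_j$, integrate \eqref{cond4} and \eqref{eq:47} to obtain the quadratic metric \eqref{eq:23} as the $\eta$-raising of a general two-dimensional Monge metric with $\lambda=-\tfrac12 c_0$, and then verify the remaining conditions (the paper checks \eqref{cond1}, \eqref{cond3} and the curvature condition \eqref{eq:13} by the same direct computation on the six-parameter family, finding $R^{12}_{12}=-c_0$ exactly as in your constant-curvature step). The only cosmetic difference is that you package \eqref{cond3} and \eqref{eq:13} together as a single constant-curvature verification via \eqref{eq:43}, while the paper checks them as two separate short computations.
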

\begin{proof}
  The unknown metric $g^{ij}$ can be reconstructed from a Monge metric using
  \eqref{eq:1}. In this way, \eqref{cond2} will be solved by
  construction. Then, a simple calculation proves that the equations
  \eqref{cond1} and \eqref{cond3} are verified.

  From \eqref{cond00} we easily obtain $w^1_2=w^2_1=0$ and $w^1_1=w^2_2$. If we
  use such conditions, all other equations are identically verified, with the
  exception of~\eqref{cond4} that yields the equation
  \begin{equation}\label{eq:27}
    c_0=-2w^1_1=-2w^2_2 .
  \end{equation}
  Concerning the Hamiltonian conditions on $P_1$, we see that~\eqref{eq:621} is
  verified by the contravariant metric~\eqref{eq:23} and
  $w^i_j=-1/2 c_0\delta^i_j$. Moreover, it is easy to calculate that the only
  nonzero component of the curvature $(g_{ij})$ is $R^{12}_{12}=-c_0$; using
  the condition~\eqref{eq:27}, we immediately see that~\eqref{eq:13} is
  verified.

  The Monge metric of the operator $P_1$ is
  \begin{align*}
    &\bar{g}_{11} = c_0(u^2)^2 + c_3u^2 + c_4,
    \\
    &\bar{g}_{12}= - c_0u^1u^2 - \frac{1}{2}c_3u^1 -\frac{1}{2}c_1u^2 + c_5,
    \\
    &\bar{g}_{22}= c_0(u^1)^2 + c_1u^1 + c_2.
  \end{align*}
  It is easy to prove that the (symmetric) matrix $\mathcal{Q}$ of the
  corresponding quadratic line complex is generic (up to the non-degeneracy
  requirement): if we fix Lie's form of Pl\"ucker's coordinates
  \begin{equation}
    \label{eq:25}
    X^T= (u^1du^2 - u^2du^1, du^1, du^2)
  \end{equation}
  the Monge metric $\bar{g}$ is the quadratic expression
  $\bar{g}=X^T\mathcal{Q}X$ where
  \begin{equation}
    \label{eq:24}
    \mathcal{Q}=
    \begin{pmatrix}
      c_0 & -\frac{1}{2}c_3 & \frac{1}{2}c_1\\
      -\frac{1}{2}c_3 & c_4 & c_5 \\
      \frac{1}{2}c_1 & c_5 & c_2\\
    \end{pmatrix}
  \end{equation}
  This is a generic conic in $\mathbb{P}(V)$ (up to the non-degeneracy
  requirement on $(g^{ij})$).
\end{proof}

\begin{corollary}
  The Hamiltonian operator $P_1$ is local if and only if $c_0=0$; in this case,
  the operator coincides with the class that has been found in
  \cite{LSV:bi_hamil_kdv}.
\end{corollary}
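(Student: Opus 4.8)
The plan is to read both claims directly off the explicit description of $P_1$ furnished by Theorem~\ref{th:classn2}. That theorem produces the operator in the form~\eqref{eq:26},
\begin{equation*}
  P_1 = g^{ij}\partial_x + \Gamma^{ij}_k u^k_x - c_0\,u^i_x\partial_x^{-1}u^j_x,
\end{equation*}
so that the only non-local contribution is the single term $-c_0\,u^i_x\partial_x^{-1}u^j_x$ (equivalently, the non-local data is $w^i_j=-\tfrac12 c_0\,\delta^i_j$). First I would note that the remaining summands $g^{ij}\partial_x+\Gamma^{ij}_k u^k_x$ are manifestly local, and that a term built from $\partial_x^{-1}$ cannot cancel against the differential local part; hence $P_1$ is local precisely when this term vanishes identically, which happens if and only if $c_0=0$. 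This settles the equivalence.

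For the identification with the class of \cite{LSV:bi_hamil_kdv}, I would simply specialize the contravariant metric~\eqref{eq:23} to $c_0=0$, obtaining
\begin{equation*}
  g^{11}=c_1u^1+c_2,\qquad
  g^{12}=\tfrac12 c_3u^1+\tfrac12 c_1u^2+c_5,\qquad
  g^{22}=c_3u^2+c_4,
\end{equation*}
which is verbatim the family~\eqref{eq:121} of the affine classification Theorem. Since in the local case the metric is flat---by Theorem~\ref{th:classn2} the sole curvature component is $R^{12}_{12}=-c_0$, which now vanishes---the coefficients $\Gamma^{ij}_k$ are forced to be the contravariant Christoffel symbols of the Levi--Civita connection of $g^{ij}$, exactly as in the Dubrovin--Novikov operators of \cite{LSV:bi_hamil_kdv}. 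Thus matching the metric families already matches the full operators.

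I do not expect any real obstacle here: both assertions are immediate consequences of formulas established above. The only point deserving a word of care is that locality of the \emph{operator} should be inferred from the vanishing of its unique potentially non-local term together with flatness of $g^{ij}$ (so that the surviving first-order part is genuinely of Dubrovin--Novikov type), rather than from comparison of leading coefficients alone.
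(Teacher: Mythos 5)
Your proposal is correct and follows exactly the route the paper intends: the corollary is stated as an immediate consequence of Theorem~\ref{th:classn2} (the paper gives no separate proof), and you correctly read off that the sole non-local term $-c_0u^i_x\partial_x^{-1}u^j_x$ vanishes iff $c_0=0$, that the metric~\eqref{eq:23} then reduces verbatim to the family~\eqref{eq:121} of the affine classification theorem, and that flatness (via $R^{12}_{12}=-c_0$) guarantees the resulting operator is genuinely of Dubrovin--Novikov type. Nothing is missing.
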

Note that locality is not preserved by projective reciprocal transformations.

We are ready to state the Projective Correspondence Theorem.
\begin{theorem}[Projective Correspondence Theorem]
  Let $n=2$. Then, a trio of mutually compatible Hamiltonian operators $P_1$,
  $Q_1$, $R_2$ of the form~\eqref{eq:202} is equivalently given by any two
  conics $\mathbb{C}_1$, $\mathbb{C}_2$ in $\mathbb{P}(V)$, each of rank at
  least $2$.
\end{theorem}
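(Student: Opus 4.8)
The plan is as follows. Because the second-order operator $R_2$ is fixed once and for all, a trio of the form~\eqref{eq:202} carries exactly the same information as the pair $(P_1,Q_1)$ of first-order localizable operators, each compatible with $R_2$ and with one another. I would therefore first set up a bijection between a \emph{single} operator $P_1$ satisfying $[R_2,P_1]=0$ and a conic of rank at least $2$, and then prove that the mutual compatibility $[P_1,Q_1]=0$ comes for free, so that \emph{any} two such conics may be paired without constraint.

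For the bijection I would invoke Theorem~\ref{th:classn2}: the equation $[R_2,P_1]=0$ holds if and only if the local part of $P_1$ is reconstructed through~\eqref{eq:1} from a non-degenerate Monge metric $\bar g_{ab}$, which by~\eqref{eq:24}--\eqref{eq:25} is encoded by the symmetric $3\times3$ matrix $\mathcal{Q}$ via $\bar g=X^T\mathcal{Q}X$. Since $\dim\wedge^2V=3$ and the Pl\"ucker relations are vacuous for $n=2$, such a matrix is precisely a conic in $\mathbb{P}(V)$; moreover $P_1$ determines $g^{ij}$ (its leading coefficient), hence $\bar g$ through~\eqref{eq:1}, hence the quadratic form $\mathcal{Q}$ uniquely, while the non-local datum $w^i_j=-\frac{1}{2}c_0\,\mathrm{Id}$ and the Christoffel symbols are in turn fixed by $g^{ij}$. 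Treating the conic as the quadratic form $\mathcal{Q}$ keeps track of the scale and yields a genuine bijection at the level of operators; quotienting by scale recovers the purely projective statement. The one point left to settle is which $\mathcal{Q}$ are admissible, i.e.\ correspond to an honest metric.

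The technical heart is thus to show that the non-degeneracy $\det(g^{ij})\not\equiv0$ required in Theorem~\ref{th:classn2} is equivalent to $\rk\mathcal{Q}\ge2$. Lowering indices with $\eta$ as in~\eqref{eq:1} gives $\det(g^{ij})=\det(\bar g_{ab})$, so I would study $\det(\bar g)$ directly. Writing $X=M\,(du^1,du^2)^T$ with $M$ the obvious $3\times2$ matrix, one has $\bar g=M^T\mathcal{Q}M$, the restriction of $\mathcal{Q}$ to the moving plane $W(u)=n(u)^\perp$ with $n(u)=(1,u^2,-u^1)$. If $\rk\mathcal{Q}=1$, then $\mathcal{Q}=\lambda aa^T$ and $\bar g=\lambda(M^Ta)(M^Ta)^T$ has rank $\le1$ for every $u$, so $\det(\bar g)\equiv0$. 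If $\rk\mathcal{Q}\ge2$, the restriction $\mathcal{Q}|_{W(u)}$ can degenerate only on a proper subvariety of the $u$-plane --- the line $n(u)\cdot k=0$ when $\rk\mathcal{Q}=2$, with $k$ spanning $\ker\mathcal{Q}$, and the conic $n(u)^T\mathcal{Q}^{-1}n(u)=0$ when $\rk\mathcal{Q}=3$ --- whence $\det(\bar g)\not\equiv0$. I expect this equivalence to be the main obstacle, since one must control the whole degeneration locus rather than argue at a single point.

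Finally I would show that compatibility between two admissible operators is automatic. By Theorem~\ref{th:classn2} the admissible operators form a family $P(c)$, $c=(c_0,\dots,c_5)$, depending linearly on the parameters, with $[P(c),P(c)]=0$ for every admissible $c$. Since the Schouten bracket is symmetric and bilinear on bivectors, each coefficient of $B(a,b):=[P(a),P(b)]$ is a symmetric bilinear polynomial in the parameters; it vanishes on the diagonal over the Zariski-dense open set where $\det(g^{ij})\not\equiv0$, hence $B(c,c)\equiv0$ identically, and polarization gives $B\equiv0$. Thus $[P_1,Q_1]=0$ for any two admissible operators. Combining this with the bijection above, every pair of rank-$\ge2$ conics $(\mathcal{C}_1,\mathcal{C}_2)$ produces a trio $(P_1,Q_1,R_2)$ of the form~\eqref{eq:202}, and conversely every such trio arises from a unique pair, which is the asserted correspondence.
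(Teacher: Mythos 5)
Your proposal is correct and takes essentially the same approach as the paper: operators compatible with $R_2$ correspond to conics via Theorem~\ref{th:classn2} and the Monge metric, and mutual compatibility is automatic because the family is linear in its parameters (your polarization of $[P(c),P(c)]=0$ is exactly the paper's pencil argument). The only difference is one of detail: you actually prove the equivalence $\det(g^{ij})\not\equiv 0 \Leftrightarrow \rk\mathcal{Q}\ge 2$ via the adjugate/restriction-to-the-moving-plane argument, whereas the paper merely asserts it.
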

\begin{proof}
  We observe that the action of projective reciprocal transformations on $R_2$
  yields $R_2$ multiplied by the determinant of the projective transformation,
  so $R_2$ is invariant under the action of $SL(V)$.

  Then, the action of $SL(V)$ on $V$ induces an action on $\wedge^2V$ that, in
  the case $n=2$, is bijective on $SL(\wedge^2V)$. This means that conics in
  $\mathbb{P}(\wedge^2V)$ can be classified by their rank (provided we regard
  $V$ as a complex vector space).

  The rank of the quadratic line complex corresponding to a non-degenerate
  Monge metric must be at least $2$; a rank $1$ quadratic line complex yields a
  degenerate Monge metric.

  Finally, we observe that any pencil $P_1+\lambda Q_1$ of operators of the
  type \eqref{eq:23} is of operators of the same type, due to the linearity of
  the coefficients. That implies that any two operators whose metric is defined
  by~\eqref{eq:23} are compatible.
\end{proof}

\begin{remark}
  When $n=2$ the Pl\"ucker variety is empty. Moreover, it is immediate to prove
  that the algebraic variety defined by $R_2$, a linear line complex,
  degenerates to $0$. So, no other algebraic variety else than the two conics
  of the above statements come into play when $n=2$.
\end{remark}

A first projective classification of bi-Hamiltonian trios of the shape of
Theorem~\ref{th:classn2} can be made in the following way.
\begin{proposition}
  With respect to the action of projective reciprocal transformations, there
  are two inequivalent classes of trios $R_2$, $P_1$, $Q_1$ that are mutually
  compatible and of the type \eqref{eq:21}. They are described by
  \begin{enumerate}
  \item $R_2$, $P_{1,2}$, $Q_1$, where the quadratic line complex corresponding
    to $P_{1,2}$ has rank $2$ and $Q_1$ is arbitrary, and
  \item $R_2$, $P_{1,3}$, $Q_1$, where the quadratic line complex corresponding
    to $P_{1,3}$ has rank $3$ and $Q_1$ is arbitrary.
  \end{enumerate}
\end{proposition}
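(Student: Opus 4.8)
The plan is to reduce the classification of trios to the classification of a single conic under the relevant group, exploiting the fact that the second operator $Q_1$ is never an obstruction to compatibility. First I would invoke the Projective Correspondence Theorem: a trio $(R_2,P_1,Q_1)$ of the type \eqref{eq:21} is the same datum as an ordered pair of conics $\mathcal{C}_1$, $\mathcal{C}_2$ in $\mathbb{P}(\wedge^2V)$, coming from the Monge metrics of $P_1$ and $Q_1$ via \eqref{eq:1}, each of rank at least $2$ (rank $1$ being excluded by the non-degeneracy of $g^{ij}$). Crucially, by the pencil argument used in that proof, mutual compatibility of the three operators is \emph{automatic}, so no relation between $\mathcal{C}_1$ and $\mathcal{C}_2$ is imposed: a trio is just an unconstrained pair of such conics.

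Next I would pin down the acting group. As observed in the proof of the Projective Correspondence Theorem, $R_2$ is only rescaled by projective reciprocal transformations and is invariant under $SL(V)$, while the induced action of $SL(V)$ on $\wedge^2V$ realizes (for $n=2$) the full group $SL(\wedge^2V)=SL(3,\mathbb{C})$. Hence the transformations that preserve the normalized shape of the trio act on the conics precisely through $SL(\wedge^2V)$, together with an overall scaling that is irrelevant because conics are defined projectively.

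The technical core is a normal-form statement for a single conic: over $\mathbb{C}$, two symmetric bilinear forms on $\wedge^2V$ are congruent if and only if they have the same rank, and this congruence can be realized inside $SL(\wedge^2V)$ once one works projectively. Indeed, given $h\in GL(\wedge^2V)$ with $h^{T}\mathcal{Q}_1h=\mathcal{Q}_2$, replacing $h$ by $(\det h)^{-1/3}h$ yields an element of $SL(\wedge^2V)$ carrying $\mathcal{C}_1$ onto $\mathcal{C}_2$ as projective conics. Applying this to $\mathcal{C}_1$, its rank, which must be $2$ or $3$, is a complete and invariant label, so $P_1$ can be brought to a fixed representative $P_{1,2}$ (rank $2$) or $P_{1,3}$ (rank $3$); the two cases are genuinely inequivalent because rank is preserved by the group action. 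The conic $\mathcal{C}_2$, i.e.\ $Q_1$, is then left untouched: by the automatic compatibility above, any conic of rank at least $2$ produces a compatible trio, so $Q_1$ is arbitrary within each class. This yields exactly the two announced classes.

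The only delicate point will be the passage from $GL$- to $SL$-equivalence in the normal-form step, that is, verifying that restricting to determinant-one transformations does not refine the rank classification; this is settled by the cube-root rescaling displayed above, after which everything reduces to results already established. I would finally remark that this is a \emph{first}, coarse classification: it normalizes $P_1$ but does not use the residual stabilizer of $\mathcal{C}_1$ in $SL(3,\mathbb{C})$ to further reduce $\mathcal{C}_2$, which is exactly why $Q_1$ remains free in each of the two classes.
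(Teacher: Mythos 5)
Your argument is correct and coincides with the route the paper itself takes: the paper gives no separate proof of this Proposition, leaving it as an immediate consequence of the Projective Correspondence Theorem (trios $=$ unconstrained pairs of conics of rank $\ge 2$, with compatibility automatic by the pencil/linearity argument) together with the observation that $SL(V)$ fixes $R_2$ and acts on $\wedge^2V$ through all of $SL(\wedge^2V)$, so a single conic is classified projectively by its rank, $2$ or $3$. Your explicit cube-root rescaling to pass from $GL$- to $SL$-congruence just makes precise a step the paper leaves implicit.
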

This classification is far from being complete; indeed, finding the invariants
of a pair of quadratic forms is a well-known problem, its first solution is due
to Weierstrass. Then, the normal forms yield a complete classification of the
pairs. We will get back to this topic in a forthcoming extended version of this
paper.

\subsection{Case \texorpdfstring{$n=2$}{n=2}: examples}
\label{sec:example}

We will consider, as the simplest example in $n=2$ components, the Poisson
pencil of the Kaup--Broer system (first obtained in \cite{Kuper_85}). The trio
is defined by
\begin{gather}
  \label{eq:111}
  P_1 =
  \begin{pmatrix}
    0 & \partial_x \\ \partial_x & 0
  \end{pmatrix}, \quad P_2 =
  \begin{pmatrix}
    2\partial_x & \partial_x u^1 - \partial_x^2
    \\
    u^1\partial_x + \partial_x^2 & u^2\partial_x + \partial_x u^2
  \end{pmatrix},
  \\
  R=
  \begin{pmatrix}
    0 & -1\\ 1 & 0
  \end{pmatrix}\partial_x^2.
\end{gather}
The first-order operators have the leading coefficient matrices
\begin{equation}
  \label{eq:11}
  (g^{ij}_1) =
  \begin{pmatrix}
    0 & 1\\ 1 & 0
  \end{pmatrix},\qquad (g^{ij}_2) =
  \begin{pmatrix}
    2 & u^1\\ u^1 & 2u^2
  \end{pmatrix}.
\end{equation}
The corresponding Monge metrics are
\begin{equation}
  \label{eq:14}
  (\bar{g}_{1,ab}) =
  \begin{pmatrix}
    0 & -1\\ 1 & 0
  \end{pmatrix}^T
  \begin{pmatrix}
    0 & 1\\ 1 & 0
  \end{pmatrix}
  \begin{pmatrix}
    0 & -1\\ 1 & 0
  \end{pmatrix}=
  \begin{pmatrix}
    0 & -1\\ -1 & 0
  \end{pmatrix}
\end{equation}
and
\begin{equation}
  \label{eq:153}
  (\bar{g}_{2,ab}) =
  \begin{pmatrix}
    0 & -1\\ 1 & 0
  \end{pmatrix}^T
  \begin{pmatrix}
    2 & u^1\\ u^1 & 2u^2
  \end{pmatrix}
  \begin{pmatrix}
    0 & -1\\ 1 & 0
  \end{pmatrix}=
  \begin{pmatrix}
    2u^2 & -u^1\\ -u^1 & 2
  \end{pmatrix}.
\end{equation}
We recall that Pl\"ucker's coordinates in Monge form are
\begin{equation}
  \label{eq:16}
  u^1du^2 - u^2du^1,\quad du^1, \quad du^2.
\end{equation}
With respect to the above coordinates, the matrices of the quadratic line
complexes take the form
\begin{equation}
  \label{eq:17}
  Q_1 =
  \begin{pmatrix}
    0 & 0 & 0\\ 0 & 0 & 1\\ 0 & 1 & 0
  \end{pmatrix},\quad Q_2 =
  \begin{pmatrix}
    0 & -1 & 0\\ -1 & 0 & 0\\ 0 & 0 & 2
  \end{pmatrix}.
\end{equation}
Indeed, it is easy to realize that
\begin{equation}
  \label{eq:18}
  \bar{g}_{2,ab}du^adu^b = -2(u^1du^2 - u^2du^1)du^1 + 2du^2du^2,
\end{equation}
and similarly for the other Monge metric. We observe that Pl\"ucker variety is
empty for the Pl\"ucker embedding of $\mathbb{P}^2$, so the above quadratic
forms provide the only defining equations for the corresponding quadratic line
complexes.

\begin{remark}
  Note that $\rk(Q_1)=2$ and $\rk(Q_2)=3$. That means that, while $Q_1$ defines
  a third-order homogeneous Hamiltonian operator according with \cite{FPV14},
  $Q_2$ \emph{does not} define a local third-order HHO (but see
  \cite{casati19:_hamil}, as it could be non-local!).
\end{remark}

\begin{remark}
  Another remarkable example is the AKNS Hamiltonian trio (see
  \cite{LSV:bi_hamil_kdv} and references therein); we will not calculate the
  corresponding quadratic line complexes here as they can be found as in the
  above Example; however, both first-order operators are defined by a Monge
  metric whose matrix $Q$ has rank $2$: $Q_1$ is the same as in the previous
  example and the other is
  \begin{equation}
    \label{eq:19}
    Q_2 =
    \begin{pmatrix}
      0 & 1 & 0\\ 1 & 2 & 0\\ 0 & 0 & 0
    \end{pmatrix}.
  \end{equation}
\end{remark}

\subsection{Case \texorpdfstring{$n=4$}{n=4}: classification}
\label{sec:class-bi-hamilt}

When $n=4$, we have been able to find a complete solution of the problem. We
used the following algorithm.

First of all, we fix a second-order operator, for example
\begin{equation}
  \label{eq:28}
  R_2=
  \begin{pmatrix}
    0 & 0 & 0 & -1\\ 0 & 0 & -1 & 0\\ 0 & 1 & 0 & 0\\ 1 & 0 & 0 & 0
  \end{pmatrix}\partial_x^2.
\end{equation}
We observe that the corresponding $3$-form is
\begin{displaymath}
  \eta = -2dv^1\wedge dv^4\wedge dv^5 -2dv^2\wedge dv^3\wedge dv^5.
\end{displaymath}
The equations of the corresponding linear line congruence are
$\eta_{ijk}p^{jk}=0$, which translate into the system
\begin{equation}
  \label{eq:61}
  p^{14}+p^{23} = 0,\ p^{i5}=0,\, i<5,
\end{equation}
which yield a linear line congruence upon intersection with
$\operatorname{Gr}(2,V)$.

Reciprocal projective transformations act non-trivially on $R_2$, but we will
know all canonical forms of the trios if we compute $P_1$ as the non-local and
localizable first-order homogeneous operators that are compatible with $R_2$:
$[R_2,P_2]=0$. In this subspace of operators we must then compute all pairs
$P_1$, $Q_1$ of operators that are compatible: $[P_1,Q_1]=0$ to form trios (see
\eqref{eq:202} and thereafter).

We brought to an end the first part of the above programme: we computed all
$P_1$ that are compatible with $R_2$ and have the above form. The results are
available upon request by email. The calculation was nontrivial
and was performed on a compute server of the Istituto Nazionale di Fisica
Nucleare (INFN -- Italian National Institute of Nuclear Physics), using Reduce
\cite{reduce,reduceproject} and about 64GB of RAM for 1h.

It is worth to describe the algorithm that we used.
\begin{enumerate}
\item First of all, since we know that the metric of the first-order operator
  is a Monge metric, we calculate the most general Monge metric in the case
  $n=4$. It is parametrized by a finite number of constants.
\item We also know that $w^i_j$ are constants, and we use this information in
  the setup of the computation.
\item Christoffel symbols $\Gamma^{ij}_k$ are determined by the formula
  \eqref{eq:52} in terms of $w^i_j$ and of new unknown constants $b^{ij}_k$.
  Summarizing, the unknowns are constants, and are: the coefficients in the
  Monge metric, the coefficients in the `tail' $w^i_j$ and the coefficients
  $b^{ij}_k$ that make up $\Gamma^{ij}_k$.
\item Then, compatibility equations are solved. There are $2$ groups of linear
  equations in the above unknowns: \eqref{cond00} and \eqref{cond1}. The
  conditions \eqref{cond2} and \eqref{cond4} are automatically satisfied.  The
  nonlinear condition is the associativity condition \eqref{cond3}.

  The Hamiltonian operator conditions on $P_1$ are \eqref{eq:47} (which is
  linear with respect to the unknowns), \eqref{eq:46}, \eqref{eq:4},
  \eqref{eq:621} (which are nonlinear). Note that the equations \eqref{eq:13}
  are automatically fulfilled.
\item The overdetermined system solver CRACK \cite{WB,WB95}, a package working
  in Reduce, was used to solve the above nonlinear algebraic equations. The
  solution is way too involved to be printed out here, consisting of 288
  subcases. It is available on demand by email.
\item It is excessively complicated to write down all solutions of the
  compatibility conditions from $[P_1,Q_1]=0$, where $P_1$ and $Q_1$ are two
  solutions of the above equations (think of each of the 288 subcases to be
  used in a compatibility computation with another operator from each of the
  $288$ subcases). However, the solutions are computable in reasonable time
  with modern computers, see below.
\end{enumerate}

We observe that the results obtained are not exactly a classification of the
trios with the given $R_2$; indeed, the reciprocal transformations act on $R_2$
with a stabilizer, that might be used to reduce the number of constants in
$P_1$. At the moment, we do not consider this problem.

\subsection{Case \texorpdfstring{$n=4$}{n=4}: a subclass}
\label{sec:case-n=4:-subclass}

In view of the complexity of the compatibility calculation of the operators in
the full set of solutions of $[R_2,P_1]=0$, we can present here the results for
a subset of all possible trios: namely, those that are a direct generalization
of the Kaup--Broer and the AKNS trios in Subsection~\ref{sec:example}.

Indeed, we can observe that in those examples $P_1$ has always constant form
(in particular, its matrix is the `antidiagonal identity'). We can therefore
postulate the form of $P_1$ (besides the form of $R_2$) as
\begin{equation}
  \label{eq:30}
  P_1 =
  \begin{pmatrix}
    0 & 1 & 0 & 0
    \\
    1 & 0 & 0 & 0
    \\
    0 & 0 & 0 & 1
    \\
    0 & 0 & 1 & 0
  \end{pmatrix}\partial_x
\end{equation}
and then find, in the set of solutions of $[R_2,Q_1]=0$ with the given ansatz
of $Q_1$, those that are compatible with $P_1$: $[P_1,Q_1]=0$.

We obtain 64 cases of first-order operators $Q_1$ as above. The computation is
shorter than that of $[R_2,Q_2]=0$ only, and can be done on a modern laptop. It
makes use of the packages developed in \cite{m.20:_weakl_poiss} in order to
calculate the conditions $[P_1,Q_1]=0$. Here, we will just show two cases, one
is local and the other is non-local.

\subsubsection*{Local case}

The metric of the first-order operator is
\footnotesize\begin{equation}
  \label{eq:31}
  (g^{ij})= \begin{pmatrix}
    2  b^{11}_2  u^{2} + c_{55} & c_{54} & b^{11}_2  u^{4}
    +b^{13}_1  u^{1} - c_{49} & b^{13}_1  u^{2} - c_{34}
    \\ 
    c_{54} & 0 &b^{13}_1  u^{2} - c_{34} & 0
    \\ 
    b^{11}_2  u^{4} + b^{13}_1  u^{1} - c_{49} &
    b^{13}_1  u^{2} - c_{34} & 2  b^{13}_1  u^{3} + c_{46} &
    2  b^{13}_1  u^{4} + c_{31}
    \\
    b^{13}_1  u^{2} - c_{34} & 0 & 2  b^{13}_1  u^{4} + c_{31} & 0
  \end{pmatrix}
\end{equation}
\normalsize The free parameters are
\begin{equation}
  \label{eq:35}
  b^{11}_2,b^{13}_1,c_{31},c_{34},c_{46},c_{49},c_{54},c_{55}.
\end{equation}
Nonzero coefficients in the Christoffel symbols are determined by the only
nonzero constants $b^{ij}_k$, which are
\begin{gather*}
  b^{14}_2=b^{13}_1,\quad b^{23}_2=b^{13}_1,\quad b^{31}_4=b^{11}_2,
  \\
  b^{33}_3=b^{13}_1,\quad b^{34}_4=b^{13}_1, \quad b^{43}_4=b^{13}_1.
\end{gather*}
It turns out that nonzero Christoffel symbols (in upper indices) are
\begin{gather*}
  \Gamma^{11}_2=b^{11}_2,\quad \Gamma^{13}_1=b^{13}_1,\quad
  \Gamma^{14}_2=b^{13}_1,\quad \Gamma^{23}_2=b^{13}_1,
  \\
  \Gamma^{31}_4=b^{11}_2,\quad \Gamma^{33}_3=b^{13}_1, \quad
  \Gamma^{34}_4=b^{13}_1, \quad \Gamma^{43}_4=b^{13}_1.
\end{gather*}

\subsubsection*{Non-local case}
The metric of the first-order operator is
\begin{multline}
  \label{eq:32}
  (g^{ij})=\left(
  \begin{matrix}
    0&c_{54} -(u^{1})^{2} w^2_1
    \\
    c_{54}-(u^{1})^{2} w^2_1&2 b^{22}_1 u^{1} +c_{53}-2 u^{1} u^{2} w^2_1 
    \\
    0& - (c_{34} + u^{1} u^{3} w^2_1) 
    \\
    -(c_{34} +u^{1} u^{3} w^2_1) &b^{22}_1 u^{3} - c_{33}- u^{1} u^{4} w^2_1
    -u^{2} u^{3} w^2_1
  \end{matrix}
\right.
\\
\left.
  \begin{matrix}
    0 & - (c_{34} +u^{1} u^{3} w^2_1 )
    \\
     - (c_{34} +u^{1} u^{3} w^2_1 ) & b^{22}_1 u^{3} -c_{33}
     -u^{1} u^{4} w^2_1 - u^{2} u^{3} w^2_1
     \\
      0 & c_{31} - (u^{3})^{2} w^2_1
     \\
     c_{31} - (u^{3})^{2} w^2_1& c_{28} - 2 u^{3} u^{4} w^2_1
  \end{matrix}
  \right)
\end{multline}
The non-local part is defined by the free parameter $w^2_1$ (with the
requirement $w^2_1\neq 0$) and the equations
\begin{equation}
  \label{eq:34}
  w^4_3=w^2_1,\qquad w^i_j=0 \quad\text{otherwise.}
\end{equation}

The free parameters are
\begin{equation}
  \label{eq:33}
  b^{22}_1,w_{21},c_{28},c_{31},c_{33},c_{34},c_{53},c_{54}
\end{equation}

The only nonzero constants $b^{ij}_k$ are
\begin{equation}
  \label{eq:58}
  b^{22}_1,\quad b^{42}_3=b^{22}_1.
\end{equation}

The nonzero Christoffel symbols are
\begin{gather*}
  \Gamma^{12}_1= - u^1 w^2_1,\quad \Gamma^{14}_1= - u^3w^2_1,\quad
  \Gamma^{21}_1= - u^1w^2_1,\quad \Gamma^{22}_1=b^{22}_1 - u^2w^2_1,
  \\
  \Gamma^{22}_2= - u^1w^2_1,\quad \Gamma^{23}_1= - u^3w^2_1\quad \Gamma^{24}_1=
  - u^4w^2_1,\quad \Gamma^{24}_2= - u^3w^2_1,
  \\
  \Gamma^{32}_3= - u^1w^2_1,\quad \Gamma^{34}_3= - u^3w^2_1,\quad
  \Gamma^{41}_3= - u^1w^2_1,\quad \Gamma^{42}_3=b^{22}_1 - u^2w^2_1
  \\
  \Gamma^{42}_4= - u^1w^2_1,\quad \Gamma^{43}_3= - u^3w^2_1,\quad
  \Gamma^{44}_3= - u^4w^2_1,\quad \Gamma^{44}_4= - u^3w^2_1.
\end{gather*}

\bigskip

\textbf{Acknowledgements.} We thank P. Antonini, R. Chiriv\`\i, S. Opana\-senko
for useful discussions.

\end{document}